\newcommand{\DS}{\textsc{Dominating Set}}
\newcommand{\EDS}{\textsc{Extension Dominating Set}}
\newcommand{\ERD}{\textsc{Extension Roman Domination}}
\newcommand{\RD}{\textsc{Roman Domination}}
\newcommand{\HS}{\textsc{Hitting Set}}
\newcommand{\NP}{\textsf{NP}}
\newcommand{\RomanUpperbound}{1.9332}
\newcommand{\RomanLowerbound}{1.7441}
\newcommand{\Oh}{\mathcal{O}}
\newenvironment{pf}{\begin{proof}}{\hfill\qed\end{proof}}
\begin{document}
\title{Minimal Roman Dominating Functions: Extensions and Enumeration}
\author{Faisal N. Abu-Khzam\inst1\orcidID{0000-0001-5221-8421} 
\and Henning Fernau\inst2\orcidID{0000-0002-4444-3220}
\and Kevin Mann\inst2\orcidID{0000-0002-0880-2513} 
}
\authorrunning{F. Abu-Khzam et al.}
\institute{
Department of Computer Science and Mathematics\\
Lebanese American University, 
Beirut, Lebanon.\\
\email{faisal.abukhzam@lau.edu.lb}\\
\and
Universit\"at Trier, Fachber.~4 -- Abteilung Informatikwissenschaften\\  
54286 Trier, Germany.\\
\email{\{fernau,mann\}@uni-trier.de}
}
\maketitle 

\begin{abstract}
Roman domination is one of the many variants of domination that keeps most of the 
complexity features of the classical domination problem. We prove that Roman domination behaves differently in two aspects: enumeration and extension. We develop non-trivial enumeration algorithms for minimal Roman domination functions with polynomial delay and polynomial space. Recall that the existence of a similar enumeration result for minimal dominating sets is open for decades.
Our result is based on a polynomial-time algorithm for 
\ERD: Given a graph $G=(V,E)$ and a function $f:V\to\{0,1,2\}$, is there a  minimal Roman domination function $\Tilde{f}$ with $f\leq \Tilde{f}$? Here, $\leq$ lifts 
$0< 1< 2$  pointwise; minimality is understood in this order. Our enumeration algorithm is also analyzed from an input-sensitive viewpoint, leading to a run-time estimate of $\Oh(\RomanUpperbound^n)$ for graphs of order~$n$; this is complemented by a lower bound example of $\Omega(\RomanLowerbound^n)$.

\keywords{Roman domination  \and Extension problems \and Enumeration.}
\end{abstract}

\section{Introduction}

This paper combines four lines of research: (a) studying variations of domination problems, here the Roman domination~\cite{Cocetal2004,Dre2000a,HHS98}; (b) input-sensitive enumeration of minimal solutions, a topic that has drawn attention in particular from people also interested in domination problems \cite{AbuHeg2016,CouHHK2013,CouLetLie2015,GolHKKV2016,GolHegKra2016}; (c) related to (and motivated by) enumeration, extension problems have been introduced and studied in particular in the context of domination problems\footnote{Historically, a logical extension problem~\cite{BorGurHam98} should be mentioned, as it has led to \cite[Th\'eor\`eme 2.16]{Mar2013a}, dealing with an extension variant of 3-\HS; also see \cite[Proposition 3.39]{Mar2013a} concerning implications for \EDS.} in \cite{Bazetal2018,BonDHR2019,CasFKMS2019a,CasFGMS2021,KanLMNU2015,KanLMNU1516,Mar2013a}: is a given set a subset of any minimal dominating set?; 
(d) the \textsc{Hitting Set Transversal Problem} is the question if all minimal hitting sets of a hypergraph can be enumerated with polynomial delay (or even output-polynomial) only: this question is open for four decades by now and is equivalent to several enumeration problems in logic, database theory and also to enumerating minimal dominating sets in graphs, see \cite{CreKPSV2019,EitGot95,GaiVer2017,KanLMN2014}. By way of contrast, we show that enumerating all minimal Roman domination functions is possible with polynomial delay, a result which is  quite surprising in view of the general similarities between the complexities of domination and Roman domination problems.

\begin{figure}[bth]
\begin{center}
\includegraphics[width=.65\textwidth]{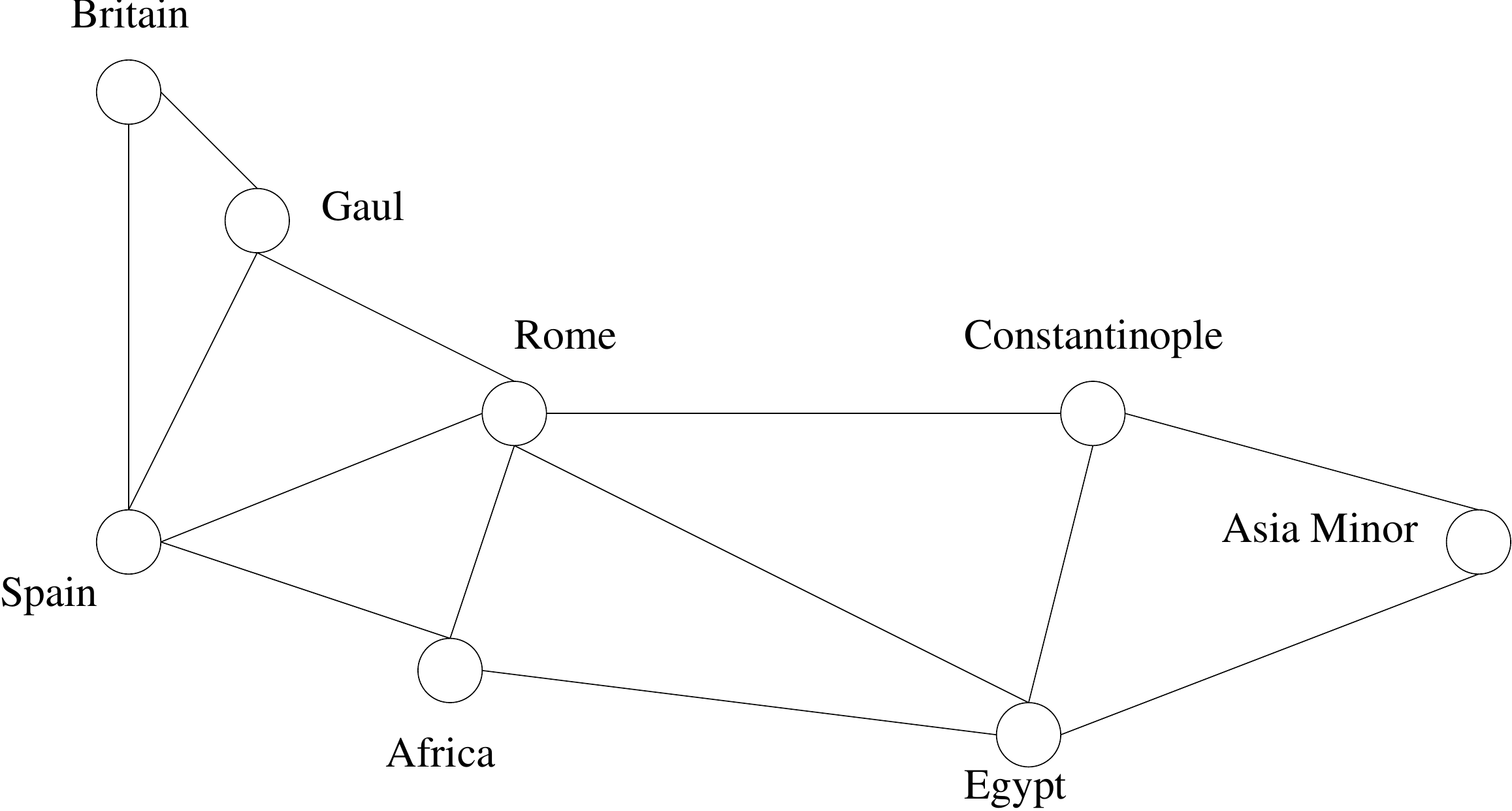}
\end{center}
\caption{\label{fig-Roman-map}The Roman Empire in the times of Constantine}
\end{figure}

\RD\ comes with a nice (hi)story: 
 namely, it should reflect the
idea of how to secure the Roman Empire by positioning the armies (legions)
on the various parts of the Empire in a way that either
(1)  a specific region $r$ is also the location of at least one army or 
(2) one region $r'$  neighboring $r$
has two armies, so that $r'$ can afford sending off one army to the region $r$ 
 (in case of an attack) without diminishing self-defense capabilities.
 More specifically, Emperor 
Constantine had a look at a map of his empire (as discussed in~\cite{Ste99}, also see Fig.~\ref{fig-Roman-map}).\footnote{
The historical background is also nicely described
in the online Johns Hopkins Magazine,  visit 
\url{http://www.jhu.edu/~jhumag/0497web/locate3.html} to pre-view~\cite{ReVRos2000}.} 
Related is the island hopping strategy pursued by General MacArthur in World War II in the Pacific theater to gradually increase the 
US-secured areas. 

\RD\ has received a lot of attention from the algorithmic community in the past 15 years~\cite{Ben2004,ChaCCKLP2013,Dre2000a,Fer08,Lie2007,Lieetal2008,LiuCha2013,Pagetal2002,PenTsa2007,ShaWanHu2010}.
Relevant to our paper is the development of exact algorithms for \RD: combining ideas from \cite{Lie2007,Roo2011}, an $\mathcal{O}(1.5014^n)$ exponential-time and -space algorithm (making use of known \textsc{Set Cover} algorithms via a transformation to \textsc{Partial Dominating Set})  was presented in~\cite{ShiKoh2014}. 
 In \cite{Chaetal2009,CheHHHM2016,Favetal2009,HedRSW2013,KraPavTap2012,LiuCha2012,LiuCha2012a,MobShe2008,XinCheChe2006,XueYuaBao2009,YerRod2013a}, more combinatorial studies can be found. This culminated in a chapter on Roman domination, stretching over nearly 50 pages in the monograph~\cite{HayHedHen2020}. There is also an interesting link to the notion of a \emph{differential} of a graph, introduced in~\cite{Masetal2006}, see \cite{BerFerSig2014}, also adding further algorithmic thoughts, as expressed in \cite{AbuBCF2016,BerFer2014,BerFer2015}. For instance, in~\cite{BerFer2014} an  exponential-time algorithm was published, based on a direct Measure-and-Conquer approach.
 
One of the ideas leading to the development of the area of \emph{extension problems} (as described in~\cite{CasFGMS2021}) was  to cut branches of search trees as early as possible, in the following sense:
to each node of the search tree, a so-called pre-solution~$U$ can be associated,  and it is asked if it is possible
to extend $U$ to a meaningful solution~$S$. In the case of \DS, this means that $U$ is   a set of vertices and a `meaningful solution' is an inclusion-wise minimal dominating set. Notice that such a strategy would work not only for computing smallest dominating sets, but also for computing largest minimal dominating set, or for counting minimal solutions, or for enumerating them. 
Alas, as it has been shown by many examples, extension problems turn out to be quite hard problems.
Even for combinatorial problems whose standard decision version is solvable in polynomial time (for instance, \textsc{Edge Cover}), its extension variation is \NP-hard. In such a case, the approach might still be viable, as possibly parameterized algorithms exist with respect to the parameter `pre-solution size'.
This would be interesting, as this parameter is small when a big gain can be expected in terms of an early abort of a search tree branch.
In particular for \EDS, this hope is not fulfilled. To the contrary, with this parameterization $|U|$, \EDS\ is one of the few problems known to be complete for the parameterized complexity class \textsf{W}[3], as shown in~\cite{BlaFLMS2019}.

With an appropriate definition of the notion of minimality, \RD\ becomes one of the few examples where the hope seeing extension variants being efficiently solvable turns out to be true, as we will show in this paper. This is quite a surprising result, as in nearly any other way, \RD\ behaves most similar to \DS.
Together with its combinatorial foundations (a characterization of minimal Roman domination functions), this constitutes the first main result of this paper. 
The main algorithmic exploit of this result is a non-trivial polynomial-space enumeration algorithm for minimal Roman domination functions that guarantees polynomial delay only, which is the second main result of the paper. As mentioned above, the corresponding question
for enumerating minimal dominating sets is open since decades, and we are not aware of any other modification of the concept of domination that seems to preserve any other of the difficulties of \DS, like classical or parameterized or approximation complexities, apart from the complexity of extension and enumeration.
Our enumeration algorithm is a branching algorithm that we analyzed with a simple Measure \& Conquer approach, yielding a running time  of $\Oh(\RomanUpperbound^n)$, which also gives an upper bound on the number of minimal Roman dominating functions of an $n$-vertex graph. This result is complemented by a simple example that proves a lower bound of $\Omega(\RomanLowerbound^n)$ for the number of minimal Roman dominating functions on graphs of order~$n$.

\section{Definitions}

Let $\mathbb{N}=\{1,2,3,\dots\}$ be the set of positive integers. For $n\in\mathbb{N}$, let $[n]=\{m\in\mathbb{N}\mid m\leq n\}$.
We only consider undirected simple graphs. 
Let $G=\left(V,E\right)$ be a graph. For $U\subseteq V$, $G[U]$ denotes the graph induced by~$U$. 
For $v\in V$, $N_G(v)\coloneqq\{u\in V\mid \lbrace u,v\rbrace\in E\}$ denotes the \emph{open neighborhood} of~$v$, while $N_G[v]\coloneqq N_G(v)\cup\{v\}$ is the \emph{closed neighborhood}  of~$v$. We extend such set-valued functions $X:V\to 2^V$ to $X:2^V\to 2^V$ by setting $X(U)=\bigcup_{u\in U}X(u)$. Subset $D\subseteq V$ is a  \emph{dominating set}, or ds for short, if $N_G[D]=V$. 
For $D\subseteq V$ and $v\in D$, define the \emph{private neighborhood} of $v\in V$ with respect to~$D$ as $P_{G,D}\left( v\right)\coloneqq N_G\left[ v\right] \setminus N_G\left[D\setminus \lbrace v\rbrace\right]$.
A function $f\colon V \to \lbrace 0,1,2 \rbrace$ is called a \emph{Roman dominating function}, or rdf for short, if for each $v\in V$ with $f\left(v\right) = 0$, there exists a $u\in N_G\left( v \right)$ with $f\left(u\right)=2$. 
To simplify the notation, we define $V_i\left(f\right)\coloneqq \lbrace v\in V\mid f\left( v\right)=i\rbrace$ for $i\in\lbrace0,1,2\rbrace$. The \emph{weight} $w_f$ of a function $f\colon V \to \lbrace 0,1,2 \rbrace$ equals $|V_1\left(f\right)|+2|V_2\left(f\right)|$. The classical \RD\ problem asks, given $G$ and an integer $k$, if there exists an rdf for~$G$  of weight at most~$k$. Connecting to the original motivation, $G$ models a map of regions, and if the region vertex~$v$ belongs to~$V_i$, then we place $i$ armies on~$v$.

For the definition of the problem \textsc{Extension Roman Domination}, we need to define the order $\leq$ on $\lbrace 0,1,2\rbrace^{V}$ first:
for $f,g \in \lbrace 0,1,2\rbrace^{V}$, let $f\leq g$ if and only if $f\left(v\right)\leq g\left(v\right)$ for all $v\in V$. In other words, we extend the usual linear ordering $\leq$ on $\{0,1,2\}$ to functions mapping to $\{0,1,2\}$ in a pointwise manner. 
 We call a function $f\in \lbrace 0,1,2\rbrace^{V}$ a \emph{minimal Roman dominating function} if and only if $f$ is a rdf and there exists no rdf $g$, $g\neq f$, with $g\leq f$.\footnote{According to \cite{HayHedHen2020}, this notion of minimality for rdf was coined by Cockayne but then dismissed, as it does not give a proper notion of \emph{upper Roman domination} number. However, in our context, this definition seems to be the most natural one, as it also perfectly fits the extension framework proposed in \cite{CasFGMS2022}. We will propose in \autoref{sec:alternative-notion} yet another notion of minimal rdf that also  fits the mentioned extension framework.} The weights of minimal rdf can vary considerably. Consider for example a star $K_{1,n}$ with center~$c$. Then, $f_1(c)=2$, $f_1(v)=0$ otherwise; $f_2(v)=1$ for all vertices~$v$; $f_3(c)=0$, $f_3(u)=2$ for one $u\neq c$, $f_3(v)=1$ otherwise, define three minimal rdf with weights $w_{f_1}=2$,  and $w_{f_2}=w_{f_3}=n+1$. 

\vspace{5pt}

\noindent
\centerline{\fbox{\begin{minipage}{.99\textwidth}
\textbf{Problem name: }\ERD, or \textsc{ExtRD} for short\\
\textbf{Given: } A graph $G=\left( V,E\right)$ and a function $f\in \lbrace 0,1,2 \rbrace^V.$\\
\textbf{Question: } Is there a minimal rdf $\widetilde{f} \in \lbrace 0,1,2\rbrace^V$ with $f\leq \widetilde{f}$?
\end{minipage}
}}

\vspace{5pt}

As our first main result, we are going to show that \textsc{ExtRD} can be solved in polynomial time in \autoref{sec:poly-time-ExtRD}.
To this end, we need some understanding of the combinatorial nature of this problem, which we provide in \autoref{sec:properties-minimal-rdf}.

The second problem that we consider is that of enumeration, both from an output-sensitive and from an input-sensitive perspective.

\vspace{5pt}

\noindent
\centerline{\fbox{\begin{minipage}{.99\textwidth}
\textbf{Problem name: }\textsc{Roman Domination Enumeration}, or  \textsc{RDEnum} for short\\
\textbf{Given: } A graph $G=\left( V,E\right)$.\\
\textbf{Task: } Enumerate all minimal rdf ${f} \in \lbrace 0,1,2\rbrace^V$ of~$G$!
\end{minipage}
}}

From an output-sensitive perspective, it is interesting to perform this enumeration without repetitions and with polynomial delay, which means that there is a polynomial $p$ such that between the consecutive outputs of any two minimal rdf of a graph of order~$n$ that are enumerated, no more than $p(n)$ time elapses, including the corner-cases at the beginning and at the end of the algorithm. From an input-sensitive perspective, we want to upper-bound the running time of the algorithm, measured against the order of the input graph. The obtained run-time bound should not be too different from known lower bounds, given by graph families where one can prove that a certain number of minimal rdf must exist. 
Our algorithm will be analyzed from both perspectives and achieves both goals.
This is explained in \autoref{sec:enum-minimal-rdf-simple} and in \autoref{sec:enum-minimal-rdf-refined}.

\section{Properties of Minimal Roman Dominating Functions}
\label{sec:properties-minimal-rdf}

\begin{theorem}\label{t_1_2_neigborhood}
Let $G=\left(V,E\right)$ be a graph and $f: \: V \to \lbrace 0,1,2\rbrace$ be a minimal rdf. Then $N_G\left[V_2\left(f\right)\right]\cap V_1\left(f\right)=\emptyset$ holds.
\end{theorem}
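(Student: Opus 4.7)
The plan is a direct contradiction argument based on the minimality of $f$. Suppose there exists some $v\in V_1(f)$ that lies in $N_G[V_2(f)]$. Since $V_1(f)$ and $V_2(f)$ are disjoint by definition, we must have $v\notin V_2(f)$, so there is an actual neighbor $u\in N_G(v)\cap V_2(f)$. I would use this neighbor $u$ to show that the value $f(v)=1$ is ``wasteful'': we can safely lower $v$ to $0$ because $u$ already provides Roman defense for $v$.

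Formally, define $g\colon V\to\{0,1,2\}$ by $g(v)=0$ and $g(w)=f(w)$ for all $w\neq v$. Clearly $g\leq f$ and $g\neq f$. It remains to verify that $g$ is still a Roman dominating function, which would contradict the minimality of $f$. The set of vertices with $g$-value $0$ is exactly $V_0(f)\cup\{v\}$. For any $w\in V_0(f)$, because $f$ is an rdf there is some $w'\in N_G(w)$ with $f(w')=2$; since $w'\neq v$ (as $f(v)=1\neq 2$) we still have $g(w')=2$. For the newly-created zero vertex $v$ itself, the neighbor $u$ satisfies $g(u)=f(u)=2$, giving the required defender. Hence $g$ is an rdf strictly below $f$, contradicting minimality.

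There is essentially no obstacle; the only thing to be slightly careful about is to verify that lowering $f(v)$ does not strip any previously dominated zero-vertex of its defender, which is immediate because we only modify $f$ at $v$ and $f(v)\neq 2$, so no vertex in $V_2(f)$ is disturbed. This short argument gives the claim $N_G[V_2(f)]\cap V_1(f)=\emptyset$.
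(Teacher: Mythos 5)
Your proof is correct and follows essentially the same route as the paper: pick the value-$1$ vertex adjacent to a value-$2$ vertex, lower it to $0$, and verify the result is still an rdf (noting that no vertex of $V_2(f)$ is touched), contradicting minimality. The only cosmetic difference is that you explicitly dispose of the closed-neighborhood case $v\in V_2(f)$ via disjointness of $V_1(f)$ and $V_2(f)$, which the paper handles implicitly by starting from an edge $\{u,v\}$.
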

\begin{pf}
Assume that there exists a $\lbrace u,v\rbrace\in E$ with $f\left(v\right) = 2$ and $f\left(u\right)=1$.
Let
$$ \widetilde{f}:V\to \lbrace 0,1,2\rbrace,\: w\mapsto\begin{cases}
f\left(w\right), &w\neq u\\
0,& w=u
\end{cases}$$
We show that $\widetilde{f}$ is a rdf, which contradicts  the minimality of $f$, as $\widetilde{f}\leq f$ and $\widetilde{f}\left( u \right) < f\left( u\right)$ are given by construction.
Consider $w \in V_0\left(\widetilde{f}\right)$. If $w= u$, $w$ is dominated by $v$, as $\lbrace u,v\rbrace\in E$. Consider $w\neq u$. Since $f$ is a rdf and $V_0\left(f\right)\cup\lbrace u\rbrace = V_0\left(\widetilde{f}\right)$, there exists a $t\in N_G\left[ w\right]\cap V_2\left(f\right)$. By construction of $\widetilde{f}$, $V_2\left(f\right)=V_2\left(\widetilde{f}\right)$ holds. This implies $N_G\left[ w\right]\cap V_2\left(\widetilde{f}\right)\neq \emptyset$.  Hence, $\widetilde{f}$ is a rdf. 
\end{pf}

\begin{theorem}\label{t_private_neighborhood}
Let $G=\left(V,E\right)$ be a graph and $f: \: V \to \lbrace 0,1,2\rbrace$ be a minimal rdf. Then for all $v\in V_2\left( f \right)$,  $P_{G\left[V_0\left(f\right) \cup  V_2\left(f\right)\right], V_2\left(f\right)}\left(v\right) \nsubseteq \lbrace v \rbrace$ holds.
\end{theorem}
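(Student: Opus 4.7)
The plan is a contradiction argument parallel in spirit to the proof of \autoref{t_1_2_neigborhood}. I suppose there is some $v \in V_2(f)$ with $P_{H,V_2(f)}(v) \subseteq \{v\}$, where $H := G[V_0(f) \cup V_2(f)]$, and manufacture a rdf $\widetilde{f} < f$ by setting $\widetilde{f}(v) := 1$ and $\widetilde{f}(w) := f(w)$ for $w \neq v$. Since $\widetilde{f}(v) = 1 < 2 = f(v)$ and $\widetilde{f} \leq f$ pointwise, producing such a $\widetilde{f}$ and showing that it is a rdf immediately contradicts the minimality of~$f$.

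The core verification is that every $w \in V_0(\widetilde{f}) = V_0(f)$ still has a neighbor in $V_2(\widetilde{f}) = V_2(f) \setminus \{v\}$. By \autoref{t_1_2_neigborhood}, $v$ has no neighbor in $V_1(f)$, so $N_G(v) \subseteq V_0(f) \cup V_2(f) = V(H)$, and the open neighborhoods of $v$ in $G$ and in $H$ coincide. For $w \in V_0(f)$ with $w \notin N_G(v)$, any witness $t \in N_G(w) \cap V_2(f)$ already provided by the rdf condition on $f$ survives in $V_2(\widetilde{f})$. For $w \in V_0(f) \cap N_G(v)$, note that $w \in N_H[v] \setminus \{v\}$; the hypothesis $P_{H,V_2(f)}(v) \subseteq \{v\}$ then forces $w \in N_H[V_2(f) \setminus \{v\}]$, and since $w \notin V_2(f)$, $w$ must have a neighbor in $V_2(f) \setminus \{v\} = V_2(\widetilde{f})$.

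The only subtlety I foresee is the bookkeeping between private neighborhoods taken in $G$ versus in the induced subgraph $H$; \autoref{t_1_2_neigborhood} is precisely what makes this benign around any $v \in V_2(f)$, because no edge of $G$ incident to $v$ is discarded when passing to $H$. Note also that dropping $f(v)$ all the way to $0$ would not work in general: if $v$ happens to have no neighbor in $V_2(f) \setminus \{v\}$, then $v$ itself lies in $P_{H,V_2(f)}(v)$ but is not ruled out by the hypothesis, and assigning $\widetilde{f}(v) = 0$ would leave $v$ undominated. Lowering only to $1$ avoids this trap. With these points settled, the case analysis above is routine, and the contradiction with minimality of $f$ closes the proof.
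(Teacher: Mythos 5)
Your proposal is correct and follows essentially the same route as the paper's own proof: assume a violating $v\in V_2(f)$, lower $f(v)$ to $1$, and verify the rdf property for $V_0(f)$-vertices by splitting into neighbors and non-neighbors of $v$, using the privacy hypothesis in the former case. Your extra remark on why private neighborhoods in $G$ versus the induced subgraph cause no trouble (via \autoref{t_1_2_neigborhood}) is a careful touch the paper leaves implicit, but the argument is the same.
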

\begin{pf}
Define $G'\coloneqq G\left[V\setminus V_1\left( f \right)\right] = G\left[V_0\left(f\right) \cup  V_2\left(f\right)\right]$. In contrast to the claim, assume that there exists a $v\in V_2\left( f \right)$ with $P_{G', V_2\left(f\right)}(v) \subseteq \lbrace v \rbrace$. Define 
$$ \widetilde{f}:V\to \lbrace 0,1,2\rbrace,\: w\mapsto\begin{cases}
f\left(w\right), &w\neq v\\
1,& w=v
\end{cases} $$
We show that $\widetilde{f}$ is a rdf, which contradicts  the minimality of $f$, as $\widetilde{f}\leq f$ and $\widetilde{f}\left( v \right) < f\left( v\right)$ are given by construction.
Let $u\in V_0\left(\widetilde{f}\right)=V_0\left({f}\right)$. We must show that some neighbor of $u$ belongs to $V_2\left(\widetilde{f}\right)=V_2\left(f\right) \setminus \lbrace v\rbrace$.
Then, $\widetilde f$ is a rdf.

First, assume that $u$ is a neighbor of~$v$. By the choice of~$v$, $u$ is not a private neighbor of~$v$. Hence, there exists a $w\in N_{G}\left[u\right] \cap\left( V_2\left(f\right) \setminus \lbrace v\rbrace\right) = N_{G}\left[u\right]\cap V_2\left(\widetilde{f}\right)$. Secondly, 
if $u\in V_0\left(\widetilde{f}\right)$ is not a neighbor of $v$, then there exists a $w\in V_2\left(f\right)\setminus\{v\}$ that dominates~$u$, i.e.,  $w\in N_{G}\left[u\right]\cap \left(V_2\left(f\right)\setminus \lbrace v\rbrace\right) =  N_{G}\left[u\right]\cap V_2\left(\widetilde{f}\right)$.
\end{pf}

\noindent
As each $v\in V_0\left(f\right)$ has to be dominated by a $w\in V_2\left(f\right)$, the next claim follows.

\begin{corollary}\label{c_min_dom}
Let $G=\left(V,E\right)$ be a graph and $f\in \lbrace 0,1,2\rbrace^V$ be a minimal rdf. Then, $V_2\coloneqq V_2\left(f\right)$ is a minimal ds of $G\left[ N_G[V_2]\right]$, with 
$N_G[V_2]=V_0\left(f\right)\cup V_2$.
\end{corollary}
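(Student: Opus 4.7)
The plan is to verify the three assertions in the corollary in order: first the set equality $N_G[V_2]=V_0(f)\cup V_2$, then that $V_2$ dominates $G[N_G[V_2]]$, and finally that this domination is inclusion-wise minimal.

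For the set equality, I would argue both inclusions. The inclusion $N_G[V_2]\subseteq V_0(f)\cup V_2$ is exactly \autoref{t_1_2_neigborhood}, which forbids vertices of $V_1(f)$ from being in the closed neighborhood of $V_2$, combined with $V=V_0(f)\cup V_1(f)\cup V_2$. For the reverse direction, $V_2\subseteq N_G[V_2]$ is trivial, and $V_0(f)\subseteq N_G[V_2]$ is the defining property of an rdf: every vertex assigned $0$ must have a neighbor assigned~$2$.

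Next, I would observe that $V_2$ is a dominating set of $G':=G[N_G[V_2]]=G[V_0(f)\cup V_2]$. Indeed, every $v\in V_0(f)\cup V_2$ lies in $N_G[V_2]$ by the previous paragraph, and since $V_2\subseteq V(G')$, a dominator of $v$ in $V_2$ is still available inside~$G'$.

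Finally, for minimality, I invoke \autoref{t_private_neighborhood}. For each $v\in V_2$, the theorem yields a private neighbor $u\in P_{G', V_2}(v)\setminus\{v\}$. By definition of private neighborhood, $u\in N_{G'}[v]$ but $u\notin N_{G'}[V_2\setminus\{v\}]$, so $V_2\setminus\{v\}$ fails to dominate $u$ in $G'$. Hence no proper subset of $V_2$ dominates $G'$, which is precisely minimality. I do not foresee any real obstacle; the work was done in the two preceding theorems, and this corollary merely repackages them in the language of (minimal) dominating sets of the induced subgraph.
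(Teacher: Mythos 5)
Your proposal is correct and follows essentially the same route as the paper, which justifies the corollary in one sentence as a consequence of \autoref{t_1_2_neigborhood}, \autoref{t_private_neighborhood}, and the fact that every vertex in $V_0(f)$ must be dominated by a vertex in $V_2(f)$. You have merely spelled out the three pieces (the set equality, domination, and minimality via external private neighbors) in more detail than the paper does.
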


\begin{remark}
We can generalize the last statement as follows: Let $G=\left(V,E\right)$ be a graph and $f: \: V \to \lbrace 0,1,2\rbrace$ be a minimal rdf. Let $I\subseteq V_1(f)$ be an independent set in $G$. Then, $V_2\left(f\right)\cup I$ is a minimal ds of $G\left[ V_0\left(f\right)\cup V_2\left(f\right)\cup I\right]$. If $I$ is a maximal independent set in $G[V_1(f)]$, then $V_2\left(f\right)\cup I$ is a minimal ds of $G\left[ V_0\left(f\right)\cup V_2\left(f\right)\cup V_1(f)\right]$.
\end{remark}

\noindent
This allows us to deduce the following characterization result.

\begin{theorem}\label{t_porperty_min_rdf}
Let $G=\left(V,E\right)$ be a graph, $f: \: V \to \lbrace 0,1,2\rbrace$ and abbreviate
$G'\coloneqq G\left[ V_0\left(f\right)\cup V_2\left(f\right)\right]$. Then, $f$ is a minimal rdf if and only if the following conditions hold:
\begin{enumerate}
\item$N_G\left[V_2\left(f\right)\right]\cap V_1\left(f\right)=\emptyset$,\label{con_1_2}
\item $\forall v\in V_2\left(f\right) :\: P_{G',V_2\left(f\right)}\left( v \right) \nsubseteq \lbrace v\rbrace$, also called \emph{privacy condition}, and \label{con_private}
\item $V_2\left(f\right)$ is a minimal dominating set of $G'$.\label{con_min_dom}
\end{enumerate}
\end{theorem}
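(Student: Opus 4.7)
The statement is an ``if and only if'', so my plan is to split it into two directions. The ``only if'' direction essentially collates the earlier results: condition~\ref{con_1_2} is Theorem~\ref{t_1_2_neigborhood}, condition~\ref{con_private} is Theorem~\ref{t_private_neighborhood}, and condition~\ref{con_min_dom} is Corollary~\ref{c_min_dom}. The real content therefore lies in the converse, and that is where I would focus.

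For the converse, assume the three conditions hold. I would first verify that $f$ is an rdf at all: for any $v\in V_0(f)$, condition~\ref{con_min_dom} says that $V_2(f)$ dominates $v$ in $G'$, and since $v\notin V_2(f)$ this means $v$ has a $G$-neighbor in $V_2(f)$. Next I would establish minimality by contradiction: assume an rdf $g$ exists with $g\le f$, $g\ne f$, pick a vertex $v$ with $g(v)<f(v)$, and run a small case analysis on $(f(v),g(v))$. The case $f(v)=1$, $g(v)=0$ is killed at once by condition~\ref{con_1_2}: a $g$-weight-$2$ neighbor $u$ of $v$ satisfies $f(u)=2$, so $v\in V_1(f)\cap N_G[V_2(f)]$, contradicting~\ref{con_1_2}. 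The two remaining cases ($f(v)=2$, $g(v)\in\{0,1\}$) I would merge into a single argument via condition~\ref{con_private}: pick a witness $w\ne v$ in $P_{G',V_2(f)}(v)$; a short check rules out $w\in V_2(f)$ (such a $w$ would already lie in $N_{G'}[V_2(f)\setminus\{v\}]$ via itself), so $w\in V_0(f)$ and hence $g(w)=0$; because $g$ is an rdf, $w$ has a $g$-weight-$2$ neighbor $x$, and since $g(v)<2$ we have $x\ne v$, so $x\in V_2(f)\setminus\{v\}$; then the $G$-edge $wx$ is also an edge of $G'$, putting $w\in N_{G'}[V_2(f)\setminus\{v\}]$ and contradicting privacy.

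The main obstacle I anticipate is coordinating the three conditions around the private witness $w$: privacy is phrased inside $G'$, while ``$g$ is an rdf'' is a statement about $G$, so one must translate between the two carefully. The clean observation that saves the day is that any $g$-weight-$2$ vertex lies in $V_2(f)\subseteq V(G')$, so any $G$-edge from $w$ into $V_2(f)$ is automatically a $G'$-edge. Once this is unpacked, the case analysis is routine and no further machinery beyond Theorems~\ref{t_1_2_neigborhood} and~\ref{t_private_neighborhood} and Corollary~\ref{c_min_dom} is needed.
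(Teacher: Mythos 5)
Your proposal is correct and follows essentially the same route as the paper: the ``only if'' direction is read off from Theorems~\ref{t_1_2_neigborhood} and~\ref{t_private_neighborhood} and Corollary~\ref{c_min_dom}, the rdf property comes from condition~\ref{con_min_dom}, and minimality is established by a case analysis on a vertex where a smaller rdf would differ, using condition~\ref{con_1_2} for the case $f(v)=1$ and condition~\ref{con_private} for $f(v)=2$. The only (immaterial) difference is that in the $f(v)=2$ case the paper shows all of $N_G(v)$ would be dominated by $V_2(f)\setminus\{v\}$ and thereby contradicts privacy, whereas you start from the private-neighbor witness $w$ and contradict its privacy directly; both are the same implication read in opposite directions.
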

\begin{pf}
The ``only if'' follows by  \autoref{t_1_2_neigborhood}, \autoref{t_private_neighborhood} and  \autoref{c_min_dom}. 

Let $f$ be a function that fulfills the three conditions. Since $V_2\left(f\right)$ is a dominating set on $G'$, for each $u\in V_0\left( f\right)$, there exists a $v\in V_2\left(f\right)\cap N_{G}\left[u\right]$. Therefore, $f$ is a rdf. 
Let $\widetilde{f}:V \to \lbrace 0,1,2 \rbrace$ be a minimal rdf with $\widetilde{f}\leq f$. Therefore, $\widetilde{f}$ (also) satisfies the three conditions by  \autoref{t_1_2_neigborhood}, \autoref{t_private_neighborhood} and \autoref{c_min_dom}. 
Assume that there exists a $v\in V$ with $\widetilde{f}\left( v\right) < f\left( v \right)$. Hence,  $V_2\left(\widetilde{f}\right)\subseteq V_2\left(f\right)\setminus \lbrace v\rbrace$.  \\
\textbf{Case 1:} $\widetilde{f}\left( v\right)=0, f\left( v \right) =1$. Therefore, there exists a $u\in N_G\left(v\right)$ with $f\left(u\right)\geq \widetilde{f}\left(u\right)=2$. This contradicts Condition~\ref{con_1_2}.\\
\textbf{Case 2:} $\widetilde{f}\left( v\right)\in\lbrace 0, 1\rbrace, f\left( v \right) =2$.  Let  $u\in N_G\left(v\right)$ with $f(u)=0$. This implies $\widetilde{f}(u)=0$ and
$$\emptyset \neq  N_G\left[u\right] \cap V_2\left(\widetilde{f}\right)\subseteq N_G\left[u\right] \cap V_2\left(f\right)\setminus \lbrace v \rbrace$$ holds. Therefore, $N_G\left( v \right) \subseteq N_G\left[ V_2\left(f\right) \setminus \lbrace v\rbrace\right]$. This contradicts Condition~\ref{con_private}.

Thus, $\widetilde{f}=f$ holds and $f$ is minimal. 
\end{pf}

\noindent
We conclude this section with an upper bound on the size of $V_2(f)$.

\begin{lemma}\label{lem:V2-bound}
Let $G=\left(V,E\right)$ be a graph and $f: V \to \lbrace0,1,2\rbrace$ be a minimal rdf. Then $2 \: \vert V_2\left(f\right)\vert \leq \vert V \vert  $ holds.
\end{lemma}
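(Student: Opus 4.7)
The plan is to exploit the privacy condition (Condition~\ref{con_private} of \autoref{t_porperty_min_rdf}) to construct an injection from $V_2(f)$ into $V_0(f)$. Once such an injection is established, we immediately get $|V_2(f)|\leq |V_0(f)|$, and hence
$$2\,|V_2(f)| \;\leq\; |V_2(f)| + |V_0(f)| \;\leq\; |V|,$$
which is the desired bound.

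To build the injection, I would first argue that, for each $v\in V_2(f)$, any witness $u\in P_{G',V_2(f)}(v)\setminus\{v\}$ guaranteed by the privacy condition must actually lie in $V_0(f)$. Indeed, if such a $u$ were in $V_2(f)\setminus\{v\}$, then $u\in N_{G'}[u]\subseteq N_{G'}[V_2(f)\setminus\{v\}]$, contradicting $u\in P_{G',V_2(f)}(v)$. Since $V(G')=V_0(f)\cup V_2(f)$, this forces $u\in V_0(f)$. So the map sending each $v\in V_2(f)$ to one such chosen private neighbor $\varphi(v)\in V_0(f)$ is well defined.

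Next I would verify injectivity of $\varphi$. Suppose $\varphi(v)=\varphi(v')=u$ for distinct $v,v'\in V_2(f)$. Since $u$ is a private neighbor of $v$ with respect to $V_2(f)$ in $G'$, we have $u\notin N_{G'}[V_2(f)\setminus\{v\}]$. But $v'\in V_2(f)\setminus\{v\}$ and $u\in N_{G'}[v']$ (as $u=\varphi(v')$ is a private neighbor of $v'$, hence in particular in $N_{G'}[v']$), a contradiction. Thus $\varphi$ is injective.

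The only thing to double check is the edge-case $V_2(f)=\emptyset$, where the inequality holds trivially, and the observation that when applying the privacy condition the element we pick is genuinely different from $v$, which is exactly what the phrasing $P_{G',V_2(f)}(v)\nsubseteq\{v\}$ provides. I do not anticipate any real obstacle; the argument is essentially a standard private-neighbor counting trick, adapted to the subgraph~$G'$ rather than to all of~$G$, which is precisely why the bound involves $V_0(f)$ (and not, say, $V_0(f)\cup V_1(f)$).
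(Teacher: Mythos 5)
Your proof is correct and follows essentially the same route as the paper: the paper also invokes the privacy condition (via \autoref{t_private_neighborhood}) to assign to each $v\in V_2(f)$ a nonempty set of private neighbors outside $V_2(f)$, and uses their pairwise disjointness -- which is exactly your injectivity argument -- to conclude $2\,\vert V_2(f)\vert\leq\vert V\vert$. The only cosmetic difference is that you phrase the counting as an explicit injection into $V_0(f)$, whereas the paper sums the sizes of the disjoint private-neighbor sets.
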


\begin{pf}
Consider a graph $G=\left(V,E\right)$ and a minimal rdf $f:V\to\lbrace 0,1,2\rbrace$. For each $v\in V_2\left(f\right)$, let $P_f\left(v\right)= P_{G\left[V_0\left(f\right) \cup  V_2\left(f\right)\right], V_2\left(f\right)}\left(v\right) \setminus \lbrace v \rbrace\subseteq V\setminus V_2\left(f\right) $. By \autoref{t_private_neighborhood}, these sets are not empty and, by definition, they do not intersect.  Hence, we get:
$$ \vert V \vert =  \vert V_2\left( f\right) \vert +\vert V\setminus V_2\left( f\right) \vert \geq \vert V_2\left( f\right) \vert +\left\vert\bigcup_{v\in V_2\left(f\right)}P_f\left(v\right) \right\vert\geq 2\: \vert V_2\left( f\right) \vert\,. $$ 
Therefore, the claim is true. 
\end{pf}

\section{A Polynomial-time Algorithm for \textsc{ExtRD}}
\label{sec:poly-time-ExtRD}

With \autoref{t_porperty_min_rdf}, we can construct an algorithm that solves the problem \textsc{Extension Roman domination} in polynomial time. 
\begin{algorithm}
\caption{Solving instances of \textsc{ExtRD}}\label{alg}
\begin{algorithmic}[1]
\Procedure{ExtRD Solver}{$G,f$}\newline
 \textbf{Input:} A graph $G=\left(V,E\right)$ and a function $f\colon V\to \lbrace0,1,2\rbrace$.\newline
 \textbf{Output:} Is there a minimal Roman dominating function $\widetilde{f}$ with $f\leq \widetilde{f}$?
\State $\widetilde{f}\coloneqq f$. \label{alg_init}
\State $M_2\coloneqq V_2\left(f\right)$. \{ Invariant: $M_2=V_2(\widetilde{f})$ \} \label{alg_invariant}
\State $ M \coloneqq M_2$. \{ All $v\in V_2(\widetilde{f})$ are considered below; invariant: $M\subseteq M_2$. \}
\label{alg_before_while}
\While{$M\neq \emptyset$}
\State Choose $v\in M$. \{ Hence, $\widetilde{f}(v)=2$. \}
\For {$u\in N\left(v\right)$}
\If {$\widetilde{f}\left(u\right)=1$}\label{alg_if_no}
\State $\widetilde{f}\left(u\right)\coloneqq 2$.
\State Add $u$ to $M$ and to $M_2$.
\EndIf
\EndFor
\State Delete $v$ from $M$. 
\EndWhile
\For{$v\in M_2$}\label{alg_for_no}
\If{$N_G\left(v\right)\subseteq N_G\left[M_2\setminus \lbrace v\rbrace\right]$}\label{alg_private_test}
\State \textbf{Return No}. \label{alg_no}
\EndIf
\EndFor
\For{$v\in V\setminus N_G\left[ M_2\right]$}\label{alg_fil_for}
\State $\widetilde{f}\left(v\right)\coloneqq 1$.
\EndFor 
\State\textbf{Return Yes}.
\EndProcedure
\end{algorithmic}
\end{algorithm}

\begin{theorem}\label{theorem:correctness_alg}
Let $G=\left(V,E\right)$ be a graph and $f\colon V\to\lbrace0,1,2\rbrace$. 
For the inputs $G,f$, Algorithm~\ref{alg} returns yes if and only if $\left( G,f\right)$ is a yes-instance of \textsc{ExtRD}. In this case, the function~$\widetilde f$ computed by Algorithm~\ref{alg} is a minimal rdf.
\end{theorem}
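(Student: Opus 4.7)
My plan is to verify two running invariants of the algorithm, then use Theorem~\ref{t_porperty_min_rdf} on the output and, in the negative direction, a monotonicity argument that forces $M_2\subseteq V_2(\widetilde g)$ for \emph{every} hypothetical minimal rdf extension $\widetilde g\ge f$.

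First I would check the invariants claimed in the code: namely, throughout the algorithm $f\le\widetilde f$ (values of $\widetilde f$ are only ever raised, not lowered) and $M_2=V_2(\widetilde f)$ (a vertex is placed in $M_2$ precisely when its $\widetilde f$-value is set to $2$). A simple termination argument (each iteration of the \textbf{while} loop either removes an element from $M$ without re-adding it, or adds a fresh vertex to $M$; and only finitely many vertices can ever be added) shows that the loop ends. When it ends, no vertex $u$ with $\widetilde f(u)=1$ has a neighbor in $M_2$, because such a pair would still trigger the \textbf{if} on line~\ref{alg_if_no} as long as the $2$-endpoint sits in $M$, and we only discard it from $M$ after processing its neighborhood. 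This is exactly Condition~\ref{con_1_2} of Theorem~\ref{t_porperty_min_rdf} for the final $\widetilde f$, because the closing \textbf{for} loop only introduces new $1$-values on vertices in $V\setminus N_G[M_2]$.

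Next I would match the test on line~\ref{alg_private_test} to Condition~\ref{con_private}. Setting $G'\coloneqq G[N_G[M_2]] = G[V_0(\widetilde f)\cup V_2(\widetilde f)]$ (the final equality uses that line~\ref{alg_fil_for} raises every vertex outside $N_G[M_2]$ to the value $1$), one has, for $v\in M_2$, $N_{G'}[v]=N_G[v]$ and $N_{G'}[M_2\setminus\{v\}]=N_G[M_2\setminus\{v\}]\cap V(G')=N_G[M_2\setminus\{v\}]$; hence $P_{G',M_2}(v)\subseteq\{v\}\iff N_G(v)\subseteq N_G[M_2\setminus\{v\}]$. So the algorithm returns \textbf{No} precisely when Condition~\ref{con_private} fails for $M_2$, and otherwise it succeeds. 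In the \textbf{Yes} case I would verify that $\widetilde f$ is an rdf (every $v\in V_0(\widetilde f)=N_G[M_2]\setminus M_2$ has a neighbor in $M_2=V_2(\widetilde f)$) and that $M_2$ is a minimal dominating set of $G'$ (it dominates $G'$ by construction, and its minimality is exactly the successful privacy test, since removing $v$ from $M_2$ would leave $N_G[v]\setminus\{v\}$ undominated only if $v$ had a private neighbor). By Theorem~\ref{t_porperty_min_rdf}, $\widetilde f$ is thus a minimal rdf with $f\le\widetilde f$.

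The main obstacle is the converse: if the algorithm outputs \textbf{No}, no extension exists. I would prove by induction on the order in which vertices enter $M_2$ that for any minimal rdf $\widetilde g\ge f$ one has $M_2\subseteq V_2(\widetilde g)$. Base: $V_2(f)\subseteq V_2(\widetilde g)$ because $f\le\widetilde g$ forces $\widetilde g(v)=2$ whenever $f(v)=2$. Inductive step: if $u$ is added to $M_2$ while processing some $v\in M_2$, then at that moment $\widetilde f(u)=1$, which, since $\widetilde f$ only grows from $f$, means $f(u)=1$, so $\widetilde g(u)\ge 1$; by induction $v\in V_2(\widetilde g)$, so Condition~\ref{con_1_2} applied to $\widetilde g$ rules out $\widetilde g(u)=1$, leaving $\widetilde g(u)=2$, i.e., $u\in V_2(\widetilde g)$. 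With $M_2\subseteq V_2(\widetilde g)$ in hand, suppose the algorithm rejected because some $v\in M_2$ satisfies $N_G(v)\subseteq N_G[M_2\setminus\{v\}]$. Condition~\ref{con_private} for $\widetilde g$ yields a private neighbor $u\ne v$ of $v$ in $G[V_0(\widetilde g)\cup V_2(\widetilde g)]$ with respect to $V_2(\widetilde g)$; but $u\in N_G(v)\subseteq N_G[M_2\setminus\{v\}]\subseteq N_G[V_2(\widetilde g)\setminus\{v\}]$, and since $u\in V(G[V_0(\widetilde g)\cup V_2(\widetilde g)])$ this inclusion lifts to the induced subgraph, contradicting privacy. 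Hence no such $\widetilde g$ exists, completing the correctness proof.
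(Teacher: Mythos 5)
Your proof is correct and follows essentially the same route as the paper's: verify the three conditions of \autoref{t_porperty_min_rdf} for the computed $\widetilde f$ in the \emph{yes}-case, and in the \emph{no}-case show that $M_2\subseteq V_2(\widetilde g)$ for any hypothetical minimal extension $\widetilde g$ and derive a contradiction with the privacy condition. Your explicit induction on the order in which vertices enter $M_2$ is a more careful justification of the monotonicity claim that the paper states in one line, and your identification of the line-\ref{alg_private_test} test with Condition~\ref{con_private} matches the paper's argument.
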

\begin{pf}First observe that the invariants stated in Lines~\ref{alg_invariant} and~\ref{alg_before_while} of Algorithm~\ref{alg} are true whenever entering or leaving the while-loop.

Let the answer of the algorithm be \emph{yes} and $\widetilde{f}$ be the function computed by the algorithm. We will show that~$\widetilde{f}$
satisfies the 
conditions formulated in \autoref{t_porperty_min_rdf}.

Observing the if-condition in Line~\ref{alg_if_no}, clearly after the while-loop, no neighbor~$u$ of $v\in V_2\left(\widetilde{f}\right)$ fulfills $\widetilde{f}\left( u\right)=1$. Hence, $\widetilde{f}$ satisfies Condition~\ref{con_1_2}.
If the function $\widetilde{f}$ would contradict Condition~\ref{con_private} of \autoref{t_porperty_min_rdf}, then we would get to Line~\ref{alg_no} and the algorithm would answer \emph{no}. As we are considering a \emph{yes}-answer of our algorithm, we can assume that this privacy condition holds after the for-loop of Line~\ref{alg_for_no}. 
We also can assume that $M_2=V_2\left(\widetilde{f}\right)$ is a minimal ds of the graph $G\left[N_G\left[M_2\right]\right]$. Otherwise, for such a $v\in M_2$ and each $u\in N_G\left(v\right)$, there would exist a $w\in N_G\left[u\right]\cap\left( M_2\setminus\lbrace v\rbrace\right)$. In this case, the algorithm would return \emph{no} in Line~\ref{alg_no}.
In the for-loop of Line~\ref{alg_fil_for}, we update for all $v\in V\setminus N_G\left[ M_2\right]$ the value $\widetilde{f} \left( v\right)$ to~$1$. With the while-loop, this implies $N_G\left[M_2\right] = V_0\left(\widetilde{f}\right)\cup V_2\left(\widetilde{f}\right)$. Therefore, $V_2\left(\widetilde{f}\right)$ is a minimal ds of $G\left[V_0\left(\widetilde{f}\right)\cup V_2\left(\widetilde{f}\right)\right]$. 
Since we do not update the values of $\widetilde{f}$ to two in this last for-loop, Condition~\ref{con_private} from  \autoref{t_porperty_min_rdf} holds. By the while-loop and the for-loop starting in Line~\ref{alg_fil_for}, it is trivial to see that Condition~\ref{con_1_2} also holds for the final $\widetilde{f}$.
We can now use \autoref{t_porperty_min_rdf} to see that $\widetilde{f}$ is a minimal rdf. 

Since we never decrease $\widetilde{f}$ in this algorithm, starting with $\widetilde f=f$ in Line~\ref{alg_init}, we get $f\leq \widetilde{f}$. Therefore, $\left(G,f\right)$ is a \emph{yes}-instance of \textsc{ExtRD}.

Now we assume that $\left(G,f\right)$ is a \emph{yes}-instance, but the algorithm returns \emph{no}. Therefore, there exists a minimal rdf $\overline{f}$ with $f\leq \overline{f}$. Since $N_G\left[V_2\left(\overline{f}\right)\right]\cap V_1\left(\overline{f}\right)=\emptyset$, $\widetilde{f}\leq \overline{f}$ holds for the function~$\widetilde{f}$ in Line~\ref{alg_for_no}. This implies $M_2=V_2\left( \widetilde{f}\right)\subseteq V_2\left(\overline{f}\right)$. 
The algorithm returns \emph{no} if and only if there exists a $v\in M_2$ with 
$$ N_G\left(v\right)\subseteq N_G\left[ M_2\setminus \lbrace v \rbrace\right]\subseteq N_G\left[V_2\left(\overline{f}\right)\setminus \lbrace v\rbrace\right].$$
Applying again Theorem~\ref{t_porperty_min_rdf}, we see that $\overline{f}$ cannot be a minimal rdf, contradicting our assumption.
\end{pf}

\noindent
In \autoref{propos:runtime}, we prove that our algorithm needs polynomial time only.

\begin{proposition}\label{propos:runtime}
Algorithm~\ref{alg} runs in time cubic in the order of the input graph.
\end{proposition}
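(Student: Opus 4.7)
The plan is to bound the work of Algorithm~\ref{alg} phase by phase, showing each phase costs $O(|V|^2)$ or $O(|V|^3)$, with the cubic term arising only from the closing privacy test. The initialization in Lines~\ref{alg_init}--\ref{alg_before_while} is trivially $O(|V|)$.

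For the while-loop, I would exploit the fact that $\widetilde{f}$ only ever increases during the run: a vertex $u$ is inserted into $M$ (and into $M_2$) precisely when $\widetilde{f}(u)$ is raised from $1$ to $2$, and once raised, the value is never lowered. Consequently each vertex enters $M$ at most once, so the loop body executes at most $|V|$ times. Each iteration scans $N_G(v)$ at cost $O(|V|)$, giving an overall cost of $O(|V|^2)$ for this phase. (Equivalently, one could amortize over edges and conclude $O(|V|+|E|)=O(|V|^2)$.)

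The step I expect to dominate the running time, and the one most worth being careful with, is the privacy check in Line~\ref{alg_private_test}. For a fixed $v\in M_2$, I would implement the test $N_G(v)\subseteq N_G[M_2\setminus\{v\}]$ by iterating over each $u\in N_G(v)$ and scanning $N_G[u]$ for a witness in $M_2\setminus\{v\}$; a single such query takes $O(|V|)$, yielding $O(|V|^2)$ for all neighbors of one $v$, and $O(|V|^3)$ summed over $v\in M_2\subseteq V$. The concluding for-loop in Line~\ref{alg_fil_for} is easily $O(|V|^2)$, since for every $v\in V$ the membership test $v\in N_G[M_2]$ takes linear time. Adding the four phases gives the claimed cubic bound.

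No real obstacle arises; the only subtle point is the monotonicity invariant used to bound the while-loop, and beyond that the argument is a direct cost accounting. If desired, the privacy-check cost could be tightened using precomputed domination counters (for each $u$, store $|N_G[u]\cap M_2|$), which would drop the total to $O(|V|^2)$, but for the stated cubic target the straightforward analysis above suffices.
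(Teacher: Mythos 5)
Your proof is correct and follows essentially the same route as the paper: a phase-by-phase cost accounting in which the while-loop is bounded by $\mathcal{O}(n^2)$ via the observation that each vertex enters $M$ at most once, and the cubic term comes from the privacy test of Line~\ref{alg_private_test}, implemented there by constructing $N_G\left[M_2\setminus\{v\}\right]$ for each $v\in M_2$ rather than by your per-neighbor witness scan, which is an immaterial difference. Your closing remark about tightening the privacy check with domination counters goes beyond what the paper does, but is not needed for the stated cubic bound.
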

\begin{pf} Let $G=(V,E)$ be the input graph. 
Define $n=\vert V\vert$. Up to Line~\ref{alg_before_while}, the algorithm can run in linear time. As each vertex can only be once in $M$ and we look at the neighbors of each element in $M$, the while-loop runs in time $\mathcal{O}\left(n^2\right)$. In the for-loop starting in Line~\ref{alg_for_no},  we build for all $v\in M_2$ the set $N_G\left[M_2\setminus \lbrace v\rbrace\right]$. This needs $\mathcal{O}\left( n^3\right)$ time. The other steps of this loop run in time $\mathcal{O}\left(n^2\right)$.
The last for-loop requires linear time. Hence, the algorithm runs in time $\mathcal{O}\left(n^3\right)$.   
\end{pf}

\section{Enumerating  Minimal RDF  for General  Graphs}
\label{sec:enum-minimal-rdf-simple}
 
For general graphs, our general combinatorial observations allow us to strengthen the (trivial) $\mathcal{O}^*(3^n)$-algorithm for enumerating all minimal rdf  for graphs of order~$n$
down to $\mathcal{O}^*(2^n)$, as displayed in Algorithm~\ref{alg:enum}.
To understand the correctness of this enumeration algorithm, the following lemma is crucial.

\begin{algorithm}
\caption{A simple enumeration algorithm for minimal rdf}\label{alg:enum}
\begin{algorithmic}[1]
\Procedure{RD Enumeration}{$G$}\newline
 \textbf{Input:} A graph $G=\left(V,E\right)$.\newline
 \textbf{Output:} Enumeration of all  minimal rdf $f:V\to\{0,1,2\}$.
\For {all functions $f:V\to\{1,2\}$}
\For {all $v\in V$ with $f(v)=1$}
\If {$\exists u\in N_G(v): f(u)=2$}
\State $f(v)\coloneqq 0$.
\EndIf
\EndFor
\State Build graph $G'$ induced by $f^{-1}(\{0,2\})=V_0(f)\cup V_2(f)$.
\State $\text{private-test}\coloneqq 1$.
\For {all $v\in V$ with $f(v)=2$}
\If {$P_{G',V_2(F)}(v)\subseteq\{v\}$}
\State $\text{private-test}\coloneqq 0$.
\EndIf
\EndFor
\If{$\text{private-test}=1$ and if $f^{-1}(2)=V_2(f)$ is a minimal ds of $G'$}
\State Output the current function  $f:V\to\{0,1,2\}$.
\EndIf
\EndFor
\EndProcedure
\end{algorithmic}
\end{algorithm}

\begin{lemma}\label{lem:extend2}
Let $G=(V,E)$ be a graph with $V_2\subseteq V$ such that $P_{G,V_2}\left( v \right) \nsubseteq \lbrace v\rbrace$ for each $v\in V_2$ holds. Then there exists exactly one minimal rdf $f\in \lbrace 0,1,2 \rbrace ^ V$ with $V_2=V_2\left(f\right)$. Algorithm~\ref{alg} can calculate $f$.
\end{lemma}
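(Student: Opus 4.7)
The plan is to show that $f$ is completely determined by $V_2$ (giving uniqueness), to verify that the determined $f$ is indeed a minimal rdf (giving existence), and to check that feeding \autoref{alg} the natural input reproduces $f$.

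First, I would argue that any minimal rdf $f$ with $V_2(f)=V_2$ must satisfy $f(u)=0$ for every $u\in N_G(V_2)\setminus V_2$, because condition~\ref{con_1_2} of \autoref{t_porperty_min_rdf} forbids value~$1$ on any neighbor of $V_2$; and $f(v)=1$ for every $v\in V\setminus N_G[V_2]$, because such a $v$ has no neighbor in $V_2=V_2(f)$, so $f(v)=0$ would violate the rdf property. This pins $f$ down and yields uniqueness.

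Next, to see that this $f$ is genuinely a minimal rdf, I would invoke \autoref{t_porperty_min_rdf} with $G'=G[V_0(f)\cup V_2]=G[N_G[V_2]]$. Condition~\ref{con_1_2} is immediate by construction, and condition~\ref{con_min_dom} holds because $V_2$ dominates every vertex of $V_0(f)=N_G(V_2)\setminus V_2$ and is minimal as a ds of $G'$ precisely because the privacy condition holds. For condition~\ref{con_private}, the key observation is that no vertex of $V_1(f)=V\setminus N_G[V_2]$ is adjacent to $V_2$, hence $P_{G',V_2}(v)=P_{G,V_2}(v)$ for every $v\in V_2$; the lemma's hypothesis then delivers condition~\ref{con_private} verbatim.

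Finally, I would feed \autoref{alg} the pair $(G,f_0)$ with $f_0(v)=2$ on $V_2$ and $f_0(v)=0$ elsewhere. Initialization gives $M_2=V_2$, and the while-loop performs no updates since no vertex currently carries value~$1$. The privacy for-loop then succeeds by the hypothesis, and the final for-loop sets $\widetilde{f}(v)=1$ for exactly those $v\in V\setminus N_G[M_2]$, reproducing the constructed $f$. The only subtlety in the whole argument is the identification $P_{G,V_2}(v)=P_{G',V_2}(v)$, which bridges the hypothesis (formulated in $G$) and condition~\ref{con_private} of \autoref{t_porperty_min_rdf} (formulated in $G'$); this is the main place where care is needed, but it is immediate once one observes that $V_1(f)$ is disjoint from $N_G[V_2]$.
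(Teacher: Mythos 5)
Your proposal is correct and follows essentially the same route as the paper: the same explicit construction of $f$ (value $2$ on $V_2$, value $1$ off $N_G[V_2]$, value $0$ in between), verification via \autoref{t_porperty_min_rdf}, uniqueness by showing the values are forced, and running Algorithm~\ref{alg} on the input that is $2$ on $V_2$ and $0$ elsewhere. Your explicit remark that $P_{G',V_2}(v)=P_{G,V_2}(v)$ is a point the paper leaves implicit, and it is a worthwhile clarification.
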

\begin{pf}
Define \\[-4.5ex]
$$f:V\to \lbrace 0, 1, 2\rbrace, v\mapsto \begin{cases}
2, & v \in V_2\\
1, & v \notin N\left[ V_2\right]\\
0, & \text{otherwise}
\end{cases}$$
Hence, $N_G\left[ V_2 \right]=V_2\cup V_0\left(f\right)$. With the assumption $P_{G,V_2}\left( v \right) \nsubseteq \lbrace v\rbrace$, $V_2$ is a minimal ds of $G[V_2\cup V_0\left(f\right)]$. Furthermore,  $N_G\left[V_2\right]\cap V_1\left(f\right)=\emptyset$. As $V_2=V_2\left(f\right)$, all conditions of \autoref{t_porperty_min_rdf} hold and $f$ is a minimal rdf.

Let $\widetilde{f}\in\lbrace 0,1,2\rbrace^V$ be a minimal rdf with $V_2= V_2\left(\widetilde{f}\right)$. If there exists some $v\in V_0\left( f \right) \cap V_1\left(\widetilde{f}\right)$, this contradicts Condition~\ref{con_1_2}, as $v\in N_G\left[V_2\right] = N_G\left[V_2\left(\widetilde{f}\right)\right]$. Therefore, $ V_0\left( f\right)\subseteq V_0\left( \widetilde{f}\right) $ holds. By the assumption that $\widetilde{f}$ is a rdf, for each $v\in V_0\left( \widetilde{f} \right)$ there exists a $u\in V_2\left(\widetilde{f}\right)\cap N\left[v\right] = V_2\cap N\left[v\right]$. This implies $v\in N_G\left[V_2\right]\setminus V_2= V_0\left( f\right)$. Therefore, $ V_0\left( f\right) = V_0\left( \widetilde{f}\right)$ holds. This implies $f=\widetilde{f}$.

\vspace{5pt}
Define: 
$$\widehat{f}:V\to \lbrace 0, 1, 2\rbrace, v\mapsto \begin{cases}
2, & v\in V_2\\
0, & v\notin V_2
\end{cases}.$$
It is trivial to see that $\widehat{f}\leq f$. By \autoref{theorem:correctness_alg}, Algorithm~\ref{alg} returns \emph{yes} for the input $\widehat{f}$. Let $\overline{f}$ be the the minimal rdf produced by Algorithm~\ref{alg}, given~$\widehat{f}$. We want to show that $V_2=V_2\left( \overline{f} \right)$. We do this by looking at the steps of the algorithm.
Since $V_1\left( \widehat{f} \right) = \emptyset$, the algorithm never gets into the If-clause in Line~\ref{alg_if_no}. This is the only way to update a vertex to the value 2. Therefore, $V_2= V_2\left(\overline{f}\right)$. 
\end{pf}

\begin{proposition} Let $G=\left(V,E\right)$ be a graph. For minimal rdf $f,g \in \lbrace0,1,2\rbrace ^ V$ with $V_2\left(f\right)=V_2\left(g\right)$, it holds $f=g$.
\end{proposition}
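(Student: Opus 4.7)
The plan is to derive this as a direct corollary of \autoref{lem:extend2}, whose uniqueness assertion already says: if $V_2 \subseteq V$ satisfies the privacy condition $P_{G,V_2}(v) \nsubseteq \{v\}$ for every $v \in V_2$, then there is at most one minimal rdf $f$ with $V_2(f) = V_2$. So the whole task reduces to verifying that the common set $V_2 := V_2(f) = V_2(g)$ satisfies this hypothesis.

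For that verification I would argue as follows. Since $f$ is a minimal rdf, \autoref{t_porperty_min_rdf} applies, and Condition~\ref{con_1_2} gives $N_G[V_2] \cap V_1(f) = \emptyset$. Consequently, for every $v \in V_2$, the closed neighborhood $N_G[v]$ is contained in $V_0(f) \cup V_2$, and similarly $N_G[V_2 \setminus \{v\}]$ intersected with $V_0(f) \cup V_2$ is exactly $N_{G'}[V_2 \setminus \{v\}]$, where $G' = G[V_0(f) \cup V_2]$. Hence the private neighborhoods taken in $G$ and in $G'$ coincide, i.e.\ $P_{G,V_2}(v) = P_{G',V_2}(v)$, and Condition~\ref{con_private} of \autoref{t_porperty_min_rdf} then yields $P_{G,V_2}(v) \nsubseteq \{v\}$ as required.

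With the hypothesis of \autoref{lem:extend2} established, the uniqueness part of that lemma immediately gives $f = g$. The only subtlety worth flagging is the switch between ``private neighborhood in $G$'' (used in \autoref{lem:extend2}) and ``private neighborhood in $G'$'' (used in \autoref{t_porperty_min_rdf}); the paragraph above handles this by invoking Condition~\ref{con_1_2}, which guarantees no vertex of $V_1(f)$ lies in $N_G[V_2]$ and thus cannot influence the private neighborhoods of $V_2$-vertices. No further case analysis is needed.
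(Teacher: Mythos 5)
Your proof is correct and takes essentially the same route as the paper: both reduce the claim to the uniqueness assertion of \autoref{lem:extend2}, the paper verifying the hypothesis by citing \autoref{t_private_neighborhood} and you by citing \autoref{t_porperty_min_rdf}. Your extra care in reconciling the private neighborhood taken in $G$ (as in \autoref{lem:extend2}) with that taken in $G'$ (as in the characterization theorem) via Condition~\ref{con_1_2} is a point the paper's one-line proof silently glosses over, and your treatment of it is sound.
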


\begin{pf}
By \autoref{t_private_neighborhood}, $V_2\left(f\right)$ fulfills the conditions of \autoref{lem:extend2}. Therefore, there exists a unique minimal rdf $h\in\lbrace0,1,2\rbrace ^ V$ with $V_2\left( h \right)=V_2\left( f \right) = V_2\left( g \right)$. Thus. $f=g=h$ holds.
\end{pf}

Hence, there is a bijection between the minimal rdf of a graph $G=(V,E)$ and subsets $V_2\subseteq V$ that satisfy the condition of \autoref{lem:extend2}.

\begin{proposition}\label{prop:RomanEnum}
All minimal rdf of a graph of order~$n$ can be enumerated in time
$\mathcal{O}^*(2^n)$.
\end{proposition}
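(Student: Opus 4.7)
The plan is to argue that Algorithm~\ref{alg:enum} is correct and runs in time $\mathcal{O}^*(2^n)$, and to point out that the outer loop walks over only $2^n$ candidates rather than the trivial $3^n$.

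First, I would observe that the outer \textbf{for}-loop ranges over all functions $f: V \to \{1,2\}$, of which there are exactly $2^n$. Each such function is in bijection with a subset $V_2 \subseteq V$, namely $V_2 = f^{-1}(2)$. Inside the loop, every operation is clearly polynomial in $n$: the rewriting of values $1 \mapsto 0$ at vertices adjacent to $V_2$, the construction of $G' = G[V_0(f) \cup V_2(f)]$, the privacy test $P_{G',V_2(f)}(v) \subseteq \{v\}$ for each $v \in V_2$, and the test whether $V_2$ is a minimal dominating set of $G'$. Hence the total running time is $\mathcal{O}^*(2^n)$.

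Next I would verify correctness. After the first inner \textbf{for}-loop, Condition~\ref{con_1_2} of \autoref{t_porperty_min_rdf} holds for $f$ by construction: any $v$ with $f(v)=1$ that was adjacent to $V_2(f)$ has been reset to $0$, while vertices with $f(v)=1$ that remain are exactly those outside $N_G[V_2(f)]$. The subsequent privacy test and minimal-ds test enforce Conditions~\ref{con_private} and~\ref{con_min_dom}. Thus by \autoref{t_porperty_min_rdf}, the algorithm outputs exactly the minimal rdf.

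Finally I would show that every minimal rdf is produced, and produced exactly once. For uniqueness, note that different choices of $V_2(f)$ in the outer loop yield different output functions, so no minimal rdf is output twice; this also follows from the preceding Proposition stating that a minimal rdf is determined by $V_2(f)$. For completeness, let $\widetilde{f}$ be any minimal rdf of $G$, and consider the iteration where the outer loop assigns value $2$ to $V_2(\widetilde{f})$ and value $1$ everywhere else. By \autoref{t_porperty_min_rdf}, the set $V_2(\widetilde f)$ satisfies the privacy and minimal-ds conditions tested by the algorithm, so the algorithm reaches the output line; moreover the rewriting phase turns the initial function into precisely the function $f$ with $V_2(f) = V_2(\widetilde f)$, $V_1(f) = V \setminus N_G[V_2(\widetilde f)]$, which by \autoref{lem:extend2} equals $\widetilde f$.

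The only mild subtlety, and the step I would be most careful about, is matching the tentative values produced by the rewriting phase with the unique extension given by \autoref{lem:extend2}; once this is made explicit, the rest is bookkeeping and the $\mathcal{O}^*(2^n)$ bound is immediate from the size of the outer loop.
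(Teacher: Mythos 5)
Your proposal is correct and follows essentially the same route as the paper: the paper's own proof simply points to Algorithm~\ref{alg:enum} and cites \autoref{t_porperty_min_rdf} and \autoref{lem:extend2}, which are exactly the two ingredients you invoke. You merely spell out the details (the $2^n$-size outer loop, the bijection between minimal rdf and admissible sets $V_2$, and the matching of the rewriting phase with the unique extension from \autoref{lem:extend2}) that the paper leaves implicit.
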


\begin{pf}
 Consider Algorithm~\ref{alg:enum}. The running time claim is obvious. The correctness of the algorithm is clear due to \autoref{t_porperty_min_rdf} and \autoref{lem:extend2}.  
\end{pf}

The presented algorithm clearly needs polynomial space only, but it is less clear if it has polynomial delay.
Below, we will present a branching algorithm that has both of these desirable properties, and moreover, its running time is below $2^n$. How good or bad such an enumeration is, clearly also depends on examples that provide a lower bound on the number of objects that are enumerated. The next lemma explains why the upper bounds for enumerating minimal rdf must be bigger than those for enumerating minimal dominating sets.

\begin{lemma}
A disjoint collection of $c$ cycles on five vertices yields a graph of order $n=5c$ that has $(16)^c$ many minimal rdf. 
\end{lemma}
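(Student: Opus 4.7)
The plan is to factor the count over connected components and then enumerate minimal rdfs on a single~$C_5$. Let $G$ be the disjoint union of $c$ copies of~$C_5$. A function $f:V(G)\to\{0,1,2\}$ is a minimal rdf of $G$ if and only if its restriction to each copy of~$C_5$ is a minimal rdf there: the three conditions of \autoref{t_porperty_min_rdf} decouple across connected components, since closed and private neighborhoods never cross components. Thus the total number of minimal rdfs equals $N^c$, where $N$ is the number of minimal rdfs of~$C_5$, and it suffices to prove $N=16$.

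To compute $N$, I would use the bijection between minimal rdfs and their $V_2$-sets (established by the preceding proposition together with \autoref{lem:extend2}), and apply \autoref{lem:V2-bound} to restrict attention to $|V_2|\in\{0,1,2\}$. Writing the vertices of $C_5$ as $v_0,\dots,v_4$ modulo~$5$, the case $|V_2|=0$ yields the single constant-$1$ function, with all three conditions of \autoref{t_porperty_min_rdf} vacuously or trivially satisfied. The case $|V_2|=1$ yields $5$ functions, one per choice of centre: for each $\{v_i\}$, the graph $G'=G[N_G[v_i]]$ is a path on three vertices in which the two cycle-neighbors of $v_i$ lie in the private neighborhood of~$v_i$, so the privacy condition holds, and Conditions~1 and~3 are immediate.

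For $|V_2|=2$ the ten pairs split into the five edges $\{v_i,v_{i+1}\}$ and the five distance-$2$ pairs $\{v_i,v_{i+2}\}$, and a direct verification shows both types satisfy the three conditions. For an edge, $G'$ is the induced $4$-vertex path $v_{i+2}v_{i+1}v_iv_{i-1}$ in which each of $v_i, v_{i+1}$ retains a private endpoint, with $v_{i+3}$ becoming the unique $V_1$ vertex. For a distance-$2$ pair, $G'=G$ and each chosen vertex retains itself together with one cycle-neighbor as private. Summing gives $N=1+5+10=16$, hence $G$ admits $16^c$ minimal rdfs. The one delicate step is checking the privacy condition across the ten size-$2$ subsets; the remaining two conditions of \autoref{t_porperty_min_rdf} follow at once from the sparse structure of~$C_5$, and cases $|V_2|\ge 3$ are excluded by \autoref{lem:V2-bound} since $2|V_2|\leq 5$.
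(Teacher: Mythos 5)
Your proposal is correct and follows essentially the same route as the paper: both reduce to counting the valid sets $V_2(f)$ on a single $C_5$, use \autoref{lem:V2-bound} to restrict to $|V_2|\leq 2$, and verify that all $\binom{5}{0}+\binom{5}{1}+\binom{5}{2}=16$ candidates are realized (the paper does this by listing the sixteen functions explicitly as a lower bound and then invoking the subset count as the matching upper bound). Your explicit remark that the minimality conditions decouple over connected components is left implicit in the paper but is a welcome clarification.
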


\begin{pf}
Let $C_5$ be a cycle of length 5 with $V\left(C_5\right)=\lbrace v_1,\ldots,v_5 \rbrace$ and $E\left( C_5 \right) = \lbrace\lbrace v_i,v_{i+1}\rbrace\mid i\in [4] \rbrace \cup \lbrace \lbrace v_1,v_5 \rbrace \rbrace$. For a $f\in\lbrace0,1,2\rbrace^{V\left( C_5\right)}$ there are at least the following sixteen possibilities for $\left( f\left( v_1\right),\ldots,f\left( v_5\right)\right)$:
\begin{itemize}
    \item zero occurrences of 2: $(1,1,1,1,1)$;
    \item one occurrence of 2: $(2,0,1,1,0)$ and four more cyclic shifts;
    \item two adjacent occurrences of 2:  $(2,2,0,1,0)$  and four more cyclic shifts;
    \item two non-adjacent occurrences of 2: $(2,0,2,0,0)$   and four more cyclic shifts.
\end{itemize}

Therefore, there are at least 16 minimal rdf on $C_5$. To prove that these are all the minimal rdf, we use Lemma~\ref{lem:V2-bound}, which implies $\vert V_2\left(f\right)\vert\leq \frac{\vert V\left(C_5\right)\vert}{2} <3$. Hence, the number of minimal rdf on $C_5$ is at most $\binom{5}{0}+\binom{5}{1}+\binom{5}{2}=16$. 
\end{pf}

\begin{corollary}
There are graphs of order $n$ that have at least ${\sqrt[5]{16}\,}^n\in\Omega(\RomanLowerbound^n)$ many minimal rdf.
\end{corollary}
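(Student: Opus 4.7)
The plan is to reduce the corollary to the preceding lemma by padding to the correct number of vertices. Given $n$, write $n = 5c + r$ with $c = \lfloor n/5 \rfloor$ and $r \in \{0,1,2,3,4\}$, and let $G_n$ be the disjoint union of $c$ copies of $C_5$ together with $r$ isolated vertices. This graph has exactly $n$ vertices, and my goal is to count its minimal rdf.

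First I would observe that in any minimal rdf $f$ of $G_n$, every isolated vertex $v$ must satisfy $f(v) = 1$. Indeed, $f(v) = 0$ is impossible because $v$ has no neighbor that could take the value~$2$, and $f(v) = 2$ is ruled out by the privacy condition (Condition~\ref{con_private} of \autoref{t_porperty_min_rdf}), since then $P_{G',V_2(f)}(v) \subseteq \{v\}$. Thus the restriction of $f$ to the isolated vertices is forced.

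Next I would argue that the three conditions of \autoref{t_porperty_min_rdf} factor over connected components: each condition is local to the closed neighborhood structure of a component, so $f$ is a minimal rdf of $G_n$ if and only if its restriction to each component is a minimal rdf of that component. Combined with the preceding lemma (which gives $16$ minimal rdf per $C_5$) and the forced value on isolated vertices, the total count is $16^c \cdot 1^r = 16^c$.

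Finally, the asymptotic estimate follows from
\[
16^c \;=\; \sqrt[5]{16}^{\,5c} \;=\; \sqrt[5]{16}^{\,n-r} \;\geq\; \sqrt[5]{16}^{\,-4} \cdot \sqrt[5]{16}^{\,n},
\]
so $G_n$ has $\Omega(\sqrt[5]{16}^{\,n})$ minimal rdf, matching the claimed $\Omega(\RomanLowerbound^n)$ lower bound up to the usual constant absorbed by the $\Omega$-notation. There is no real obstacle here; the only subtlety worth stating explicitly is the behavior of isolated vertices, which makes the factoring over components clean for arbitrary $n$ rather than just multiples of~$5$.
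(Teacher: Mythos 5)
Your proposal is correct and follows essentially the same route as the paper, which states the corollary as an immediate consequence of the preceding lemma (disjoint unions of $C_5$); your additional padding argument with isolated vertices, the forced value $f(v)=1$ on them, and the factorization of the conditions of \autoref{t_porperty_min_rdf} over connected components are all sound and merely make explicit the routine handling of $n$ not divisible by $5$. The only caveat, which you already note, is that for $r>0$ the padded graph attains only $\Omega({\sqrt[5]{16}}^{\,n})$ rather than the literal count ${\sqrt[5]{16}}^{\,n}$, the latter being achieved exactly when $n=5c$.
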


We checked with the help of a computer program that there are no other connected graphs of order at most eight that yield (by taking disjoint unions) a bigger lower bound.

\section{A Refined Enumeration Algorithm}
 \label{sec:enum-minimal-rdf-refined}

In this section, we are going to prove the following result, which can be considered as the second main result of this paper.
\begin{theorem}\label{thm:minimal-rdf-enumeration}
There is a polynomial-space algorithm that enumerates all minimal rdf of a given graph of order $n$ with polynomial delay and in time $\mathcal{O}^*(\RomanUpperbound^n)$.
\end{theorem}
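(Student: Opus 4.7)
The plan is to design a branching (search-tree) algorithm that incrementally fixes the value of vertices under the three conditions of \autoref{t_porperty_min_rdf} and outputs one minimal rdf at each leaf. Polynomial delay is guaranteed by invoking a suitable variant of Algorithm~\ref{alg} at every internal node to certify that the current partial assignment~$f$ can be completed to a minimal rdf $\widetilde f \geq f$; any branch failing this test is pruned immediately, so every leaf really yields an output and the time between consecutive outputs is bounded by the tree depth (at most~$n$) times the cubic cost of the extension test (\autoref{propos:runtime}). Polynomial space is automatic for a DFS traversal of this tree.

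For the branching, I would maintain on each vertex~$v$ an admissible label set $L(v)\subseteq\{0,1,2\}$, pick an undecided vertex~$v$ (one with $|L(v)|\geq 2$) of maximum degree among its still-undecided neighbours, and branch on the choices $f(v)\in L(v)$. The conditions of \autoref{t_porperty_min_rdf} propagate as follows: choosing $f(v)=2$ removes the value~$1$ from $L(u)$ for every neighbour~$u$ of~$v$ (Condition~\ref{con_1_2}), choosing $f(v)=1$ removes the value~$2$ from every neighbour's label set, and choosing $f(v)=0$ records the obligation that some neighbour of~$v$ must eventually receive label~$2$. Whenever some $L(u)$ shrinks to a singleton, the value of~$u$ is forced and processed without further branching. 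Moreover, by \autoref{lem:V2-bound}, any branch in which $|V_2(f)|$ exceeds $n/2$ may be discarded.

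The running-time bound $\Oh(\RomanUpperbound^n)$ follows from a Measure \& Conquer analysis. Introduce weights $w_U > w_A, w_B > 0$ for vertices in the three undecided states $L(v)=\{0,1,2\}$, $L(v)=\{0,2\}$, $L(v)=\{0,1\}$, and weight~$0$ for decided vertices, and let $\mu$ be the sum of these weights. Each branching step on a vertex~$v$ of degree~$d$ decreases $\mu$ by at least $w_U$ at~$v$ itself, plus a contribution from every undecided neighbour whose label set strictly shrinks (for instance, in the $f(v)=2$ branch, each undecided neighbour drops by $w_U-w_A$ or gets fully decided). Collecting these branching vectors over all combinations of current states of~$v$ and its neighbourhood, and solving the resulting system of recurrences, one arrives after numerical optimisation of the weights at a branching factor bounded by $\RomanUpperbound$. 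Vertices of small degree (say, degree~$\leq 2$) are handled separately by direct case analysis, since the relevant recurrences otherwise deteriorate.

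The main obstacle will be twofold. First, the bookkeeping in the Measure \& Conquer step is delicate: one has to enumerate enough branching configurations, paying attention to neighbours already in state $A$ or $B$ that may become fully forced, and then choose $w_U, w_A, w_B$ so that \emph{every} resulting recurrence admits a solution at most~$\RomanUpperbound$. Second, one has to guarantee that the extension test invoked at each internal node remains sound under the extra upper-bound constraints imposed by the label sets~$L(\cdot)$; this reduces to checking that the characterization of \autoref{t_porperty_min_rdf} can be tested directly on the partial assignment, so that no branch is ever pruned that still admits a completion to a distinct minimal rdf — which is precisely what preserves the polynomial delay.
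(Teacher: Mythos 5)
Your overall architecture is the right one and matches the paper: a DFS branching tree whose internal nodes are pruned by a polynomial-time extension oracle gives polynomial delay (depth times oracle cost between outputs) and polynomial space. However, there are two genuine gaps. The first is structural: you branch ternarily on $f(v)\in L(v)$ for an undecided vertex with $L(v)=\{0,1,2\}$. The paper's essential insight, which your proposal misses, is that a minimal rdf is completely determined by its set $V_2(f)$ (\autoref{lem:extend2} and the bijection stated after it): once one knows which vertices receive the value~$2$, every other vertex is forced to $0$ if it is dominated and to $1$ otherwise. Hence the only decision that ever needs to be branched on is ``$f(v)=2$ or not,'' a \emph{binary} branch; the $0$-versus-$1$ split never costs anything. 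In your scheme the branch $f(v)=0$ merely ``records an obligation'' and shrinks no neighbour's label set, so that branch reduces the measure only by the weight of $v$ itself; a three-way branching vector with one such weak branch cannot close at base $\RomanUpperbound$ — it is not even clear it beats $2^n$. Relatedly, the quantitative claim is asserted rather than derived: the constant $\RomanUpperbound$ is the entire content of the input-sensitive statement, and it emerges in the paper only from a specific arsenal of reduction rules (last-potential-private-neighbour and private-neighbour forcing via \autoref{t_private_neighborhood}, pendant-vertex and $K_{2,2}$ cases, edge deletions), a three-phase branching priority, and the concrete weights $\omega_1=\tfrac23$, $\omega_2=0.38488$. ``Collecting the branching vectors and optimising numerically'' is exactly the step that must be exhibited, and nothing in your sketch indicates which configurations dominate or why the recurrences close at $1.9332$.

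The second gap is the extension oracle itself, which you flag but do not resolve. Algorithm~\ref{alg} decides only whether some minimal rdf $\widetilde f\geq f$ exists; your partial assignments additionally carry \emph{upper-bound} constraints (a vertex with $L(v)=\{0,1\}$ must never receive~$2$). The paper needs a modified solver for this generalized problem (forbidding a prescribed set $\overline{V_2}$ from taking value~$2$) together with a correctness proof (\autoref{lem:GenExtRD}); the subtle point is that the original algorithm \emph{promotes} vertices from value $1$ to value $2$ to restore Condition~\ref{con_1_2} of \autoref{t_porperty_min_rdf}, and one must argue that when such a promotion hits a forbidden vertex the instance genuinely has no solution. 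Without this, a branch could be pruned that still carries an undiscovered minimal rdf, destroying both correctness and the polynomial-delay guarantee. Finally, note that the no-repetition property (needed for polynomial space, since one cannot afford to store all outputs for deduplication) also comes for free from the binary ``$2$ versus not-$2$'' branching in the paper; with your label-set branching it still holds because sibling branches assign distinct values to $v$, but it deserves an explicit argument.
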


Notice that this is in stark contrast to what is known about the enumeration of minimal dominating sets, or, equivalently, of minimal hitting sets in hypergraphs. Here, it is a long-standing open problem if 
minimal hitting sets in hypergraphs can be enumerated with polynomial delay.

The remainder of this section is dedicated to describing the proof of this theorem.

\subsection{A bird's eye view on the algorithm}

As all along the search tree, from inner nodes we branch into the two cases if a certain vertex is assigned $2$ or not, it is clear that (with some care concerning the final processing in leaf nodes) no minimal rdf is output twice. Hence, there is no need for the branching algorithm to store intermediate results to test (in a final step) if any solution was generated twice. Therefore, our algorithm needs only polynomial space, as detailed in \autoref{prop:poly-space} and \autoref{cor:poly-space}.

Because we have a polynomial-time procedure that can test if a certain given pre-solution can be extended to a minimal rdf, we can build (a slightly modified version of) this test into an enumeration procedure, hence avoiding unnecessary branchings. 
Therefore, whenever we start with our binary branching, we know that at least one of the search tree branches will return at least one new minimal rdf. Hence, we will not move to more than $N$ nodes in the search tree before outputting a new minimal rdf, where $N$ is upper-bounded by twice the order of the input graph. This is the basic explanation for the claimed polynomial delay, as detailed in \autoref{prop:poly-delay}.

Let $G=(V,E)$ be a graph.
Let us call a(ny partial) function 

\vspace{-15pt}
\begin{equation*}
\begin{split}
 f: V \longrightarrow \{0,1,2,\overline{1}, \overline{2}\}
\end{split}
\end{equation*}
 a \emph{generalized Roman domination function}, or grdf for short. 
Extending previously introduced notation, let $\overline{V_1}(f) = \{x\in V\mid  f(x) = \overline{1}\}$, and $\overline{V_2}(f) = \{x\in V\mid  f(x) = \overline{2}\}$. A vertex is said to be \emph{active} if it has not been assigned a value (yet) under~$f$; these vertices are collected in the set $A(f)$. Hence, for any grdf $f$, we have the partition $V=A(f)\cup V_0(f)\cup V_1(f)\cup V_2(f)\cup \overline{V_1}(f)\cup \overline{V_2}(f)$. 

After performing a branching step, followed by an exhaustive application of the reduction rules, any grdf~$f$ considered in our algorithm always satisfies the following \textbf{(grdf) invariants}:
\begin{enumerate}
    \item $\forall x\in \overline{V_1}(f)\cup V_0(f)\,\exists y\in N_G(x):y\in V_2(f)$,
    \item $\forall x\in V_2(f):N_G(x)\subseteq \overline{V_1}(f) \cup V_0(f) \cup V_2(f)$,    
    \item $\forall x\in V_1(f):N_G(x)\subseteq \overline{V_2}(f)\cup V_0(f)\cup V_1(f)$,
    \item if $\overline{V_2}(f)\neq\emptyset$, then $A(f)\cup \overline{V_1}(f)\neq \emptyset$.\footnote{This condition assumes that our graphs have non-empty vertex sets.}
\end{enumerate}

For the extension test, we will therefore consider the function $\hat f:V\to\{0,1,2\}$ that is derived from a grdf~$f$ as follows: 
$$\hat f(v)=\begin{cases}0, & \text{if }v\in  A(f)\cup V_0(f)\cup \overline{V_1}(f)\cup \overline{V_2}(f)\\
1, & \text{if }v\in V_1(f)\\
2, & \text{if }v\in V_2(f)
\end{cases}$$

The enumeration algorithm uses a combination of reduction and branching rules, starting with the nowhere defined function $f_\bot$, so that $A(f_\bot)=V$. The schematics of the algorithm is shown in Algorithm~\ref{alg:refined-enum}. 
To understand the algorithm, call an rdf $g$ as \emph{consistent} with a grdf $f$ if $g(v)=2$ implies $v\in A(f)\cup V_2(f)\cup \overline{V_1}(f)$ and $g(v)=1$ implies $v\in A(f)\cup V_1(f)\cup \overline{V_2}(f)$ and $g(v)=0$ implies $v\in A(f)\cup V_0(f)\cup \overline{V_1}(f)\cup \overline{V_2}(f)$.
Below, we start with presenting some reduction rules, which also serve as (automatically applied) actions at each branching step, whenever applicable. The branching itself always considers a most attractive vertex $v$ and either gets assigned~2 or not. 
The running time analysis will be performed with a measure-and-conquer approach. Our simple measure is defined by  $\mu(G,f)=|A(f)|+\omega_1 |\overline{V_1}(f)| + \omega_2 |\overline{V_2}(f)|\leq |V| $
for some constants $\omega_1$ and $\omega_2$ that have to be specified later.

The measure never increases when applying a  reduction rule.

\begin{algorithm}
\caption{A refined enumeration algorithm for minimal rdf}\label{alg:refined-enum}
\begin{algorithmic}[1]
\Procedure{Refined RD Enumeration}{$G,f$}\newline
 \textbf{Input:} A graph $G=\left(V,E\right)$, a grdf $f:V\to\{0,1,2,\overline{1},\overline{2}\}$.\newline
 \emph{Assumption:} There exists at least one   minimal rdf  consistent with $f$.\newline
 \textbf{Output:} Enumeration of all  minimal rdf  consistent with $f$.
\If {$f$ is everywhere defined and $f(V)\subseteq \{0,1,2\}$}
\State Output $f$ and return.
\EndIf
\State \{\,We know that $A(f)\cup \overline{V_1}(f)\neq\emptyset$.\,\}
\State Pick a vertex $v\in A(f)\cup \overline{V_1}(f)$ of highest priority for branching.
\State $f_2\coloneqq f $; $f_2(v)\coloneqq 2$.
\State Exhaustively apply reduction rules to $f_2$. \{\,Invariants are valid for $f_2$.\,\}
\If{$\textsc{GenExtRD Solver}\left(G,\widehat{f_2},\overline{V_2}(f_2)\right)$}
\State $\textsc{Refined RD Enumeration}\left(G,{f_2}\right)$.
\EndIf
\State $f_{\overline{2}}\coloneqq f $; \textbf{if} $v\in A(f)$ \textbf{then} $f_{\overline{2}}(v)\coloneqq {\overline{2}}$ \textbf{else} $f_{\overline{2}}(v)\coloneqq 0$.
\State Exhaustively apply reduction rules to $f_{\overline{2}}$. \{\,Invariants are valid for $f_{\overline{2}}$.\,\}
\If{$\textsc{GenExtRD Solver}\left(G,\widehat{f_{\overline{2}}},\overline{V_2}(f_{\overline{2}})\right)$}
\State $\textsc{Refined RD Enumeration}\left(G,{f_{\overline{2}}}\right)$.
\EndIf
\EndProcedure
\end{algorithmic}
\end{algorithm}

\noindent
We are now presenting details of the algorithm and its analysis.

\subsection{How to achieve polynomial delay and polynomial space}
In this section, we need a slight modification of the problem \textsc{ExtRD} in order to cope with pre-solutions. In this version, we add to an instance, usually specified by $G=\left( V, E\right)$ and $f:V\to \lbrace 0, 1, 2\rbrace$, a set $\overline{V_2}\subseteq V $ with $V_2\left(f\right)\cap \overline{V_2}=\emptyset$. The question is if there exists a minimal RDF $\widetilde{f}$ with $f \leq \widetilde{f}$ and $V_2\left(\widetilde{f}\right)\cap \overline{V_2}=\emptyset$. We call this problem a \emph{generalized} rdf extension problem, or  
\textsc{GenExtRD} for short.
In order to solve this problem, we  modify Algorithm \ref{alg} to cope with \textsc{GenExtRD} by adding an if-clause after Line~\ref{alg_if_no} that asks if $u\in\overline{V_2}$. If this is true, then the algorithm returns \emph{no}, because it is prohibited that $\tilde{f}(u)$ is set to~2, while this is necessary for minimal rdf, as there is a vertex $v$ in the neighborhood of $u$ such that $\tilde{f}(v)$ has been set to~1. We call this algorithm  \textsc{GenExtRD Solver}.

\begin{lemma}\label{lem:GenExtRD}
Let $G=\left(V,E\right)$ be a graph, $f : V \to \lbrace0,1,2\rbrace$ be a  function and $\overline{V_2} \subseteq V$ be a set with $V_2\left({f}\right)\cap \overline{V_2}=\emptyset$. \textsc{GenExtRD Solver} gives the correct answer when given the \textsc{GenExtRD} instance $(G,f,\overline{V_2})$.
\end{lemma}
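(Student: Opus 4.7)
The plan is to mirror the two-direction structure of the proof of \autoref{theorem:correctness_alg}, treating the modification (an extra if-clause after Line~\ref{alg_if_no} that returns \emph{no} as soon as we would promote a vertex $u\in\overline{V_2}$ to value $2$) as a bolt-on check whose soundness and completeness can be argued separately.

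First I would handle the direction \emph{``if the algorithm answers yes, then the returned $\widetilde{f}$ witnesses a yes-instance.''} The arguments from the proof of \autoref{theorem:correctness_alg} transfer verbatim, because the extra if-clause only causes additional \emph{no}-answers and never alters the updates performed along a successful run; hence $\widetilde{f}$ is again a minimal rdf with $f\leq\widetilde{f}$ by \autoref{t_porperty_min_rdf}. The only new obligation is $V_2(\widetilde{f})\cap \overline{V_2}=\emptyset$. This follows by a trivial invariant argument: initially $V_2(f)\cap\overline{V_2}=\emptyset$ by assumption, the final for-loop in Line~\ref{alg_fil_for} only assigns values from $\{1\}$, and the sole place where $V_2(\widetilde{f})$ grows is the if-clause of Line~\ref{alg_if_no}, precisely where the modification aborts with \emph{no} should the candidate vertex lie in $\overline{V_2}$.

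Second I would handle the converse: \emph{``if $(G,f,\overline{V_2})$ is a yes-instance, the algorithm answers yes.''} Fix a minimal rdf $\overline{f}$ with $f\leq\overline{f}$ and $V_2(\overline{f})\cap\overline{V_2}=\emptyset$. Just as in the original proof, one maintains the invariant $\widetilde{f}\leq\overline{f}$ throughout the while-loop. If the original algorithm would have answered \emph{yes}, we need only check that the new if-clause does not spuriously force \emph{no}. The crucial step is: whenever the condition of Line~\ref{alg_if_no} triggers for a vertex~$u$, there is a neighbor $v$ with $\widetilde{f}(v)=2$ and $\widetilde{f}(u)=1$; by the invariant, $\overline{f}(v)=2$ and $\overline{f}(u)\geq 1$, and applying Condition~\ref{con_1_2} of \autoref{t_porperty_min_rdf} to the minimal rdf $\overline{f}$ rules out $\overline{f}(u)=1$. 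Hence $\overline{f}(u)=2$, so $u\in V_2(\overline{f})$ and therefore $u\notin \overline{V_2}$, meaning the added check lets the algorithm proceed exactly as in the unmodified run. The rest of the behaviour (the final privacy test and the for-loop in Line~\ref{alg_fil_for}) is unchanged, and the original proof delivers a \emph{yes}-answer.

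The main (and only) obstacle is the reasoning inside the while-loop showing the new if-clause cannot misfire on a yes-instance; everything else is a straightforward reuse of \autoref{theorem:correctness_alg} together with the new invariant $V_2(\widetilde{f})\cap\overline{V_2}=\emptyset$. Combining both directions yields the claim.
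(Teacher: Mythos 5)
Your proposal is correct and follows essentially the same route as the paper's own proof: both directions hinge on the observation that the if-clause of Line~\ref{alg_if_no} is the only place a vertex is promoted to value~$2$, and both use Condition~\ref{con_1_2} of \autoref{t_porperty_min_rdf} to argue that any witnessing minimal rdf $\overline{f}$ must assign~$2$ to the vertex~$u$ in question, so the new check cannot misfire on a yes-instance. Your write-up is, if anything, slightly more explicit than the paper's about the invariant $\widetilde{f}\leq\overline{f}$ and the forced conclusion $\overline{f}(u)=2$.
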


\begin{pf}
In Algorithm~\ref{alg}, the only statement where we give a vertex the value $2$ is in the if-clause of Line~\ref{alg_if_no}. The modified version would first check if the vertex is in $\overline{V_2}$. If this is true, there will be no minimal RDF solving this problem. Namely, if we give the vertex the value~$2$, this would contradict $V_2 \left( \widetilde{f} \right) \cap \overline{V_2}=\emptyset$. If the value stays $1$, this would contradict Condition~\ref{con_1_2}. By \autoref{theorem:correctness_alg}, $\widetilde{f}$ will be a minimal rdf with $V_2\left(\widetilde{f}\right) \cap \overline{V_2}=\emptyset$ if the algorithm returns \emph{yes}.

Assume there exists a minimal RDF $\overline{f}$ with $V_2\left(\overline{f}\right) \cap \overline{V_2}=\emptyset$ but the algorithm returns  \emph{no}. First we assume that  \emph{no} is returned by the new if-clause. This implies that a vertex $u\in V_1\left( f \right)$ is in the neighborhood of a vertex $v\in V$ that has to have the value~2 in any minimal rdf that is bigger than $f$ (because \autoref{theorem:correctness_alg}). But this would lead to a similar contradiction as above.

Therefore, the answer \emph{no} has to be returned in Line \ref{alg_no}. That would contradict Condition~\ref{con_private} or Condition~\ref{con_min_dom}. Thus the algorithm would correctly return \emph{yes}.
\end{pf}

Let $f$ be a generalized rdf at any moment of the branching algorithm. The next goal is to show that \textsc{GenExtRD Solver} could tell us in polynomial time if there exists a minimal rdf that could be enumerated by the branching algorithm from this point on.   

\begin{proposition}
Let $G=\left(V,E\right)$ be a graph, $f : V \to \lbrace0,1,2,\overline{1},\overline{2}\rbrace$ be a partial function. Then,  \textsc{GenExtRD Solver} correctly answers if there exists some minimal rdf $g: V \to \lbrace0,1,2\rbrace$ that is consistent with $f$ when \textsc{GenExtRD  Solver} is given the instance $(G,\hat f,\overline{V_2}(f))$.
\end{proposition}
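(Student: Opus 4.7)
The plan is to prove the biconditional: there exists a minimal rdf $g$ consistent with $f$ if and only if \textsc{GenExtRD Solver} returns \emph{yes} on input $(G, \hat f, \overline{V_2}(f))$. By \autoref{lem:GenExtRD}, the solver correctly decides whether this \textsc{GenExtRD} instance admits a minimal rdf $\tilde f$ with $\hat f \leq \tilde f$ and $V_2(\tilde f) \cap \overline{V_2}(f) = \emptyset$, so the task reduces to matching these two existence questions.

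For the forward direction, given a consistent minimal rdf $g$, I would verify $\hat f \leq g$ and $V_2(g) \cap \overline{V_2}(f) = \emptyset$ directly from the consistency definition. The latter is immediate, since $g(v) = 2$ forces $v \in A(f) \cup V_2(f) \cup \overline{V_1}(f)$, a set disjoint from $\overline{V_2}(f)$. For the former, a case analysis on the partition of $V$ suffices: on $V_1(f)$ the contrapositive of the value-$0$ consistency rule gives $g(v) \geq 1$, on $V_2(f)$ the contrapositives of the value-$0$ and value-$1$ rules together force $g(v) = 2$, and elsewhere $\hat f(v) = 0$. Thus $g$ itself witnesses a yes-instance of \textsc{GenExtRD}.

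For the converse, rather than merely invoking existence, I would analyze the concrete $\tilde f$ produced by \textsc{GenExtRD Solver} and show it is itself consistent with $f$. The grdf invariants of $f$ do the heavy lifting. Invariant~(2) implies that every neighbor of a vertex in $V_2(f)$ has $\hat f$-value in $\{0, 2\}$, so the while-loop of Algorithm~\ref{alg}, initialized with $M_2 = V_2(\hat f) = V_2(f)$, never triggers its if-clause in Line~\ref{alg_if_no}; hence $V_2(\tilde f) = V_2(f) \subseteq A(f) \cup V_2(f) \cup \overline{V_1}(f)$, verifying the value-$2$ consistency condition. The only newly assigned $1$-values come from the final for-loop, which targets vertices $v \in V \setminus N_G[V_2(f)]$ with $\hat f(v) = 0$; invariant~(1) rules out $v \in V_0(f) \cup \overline{V_1}(f)$ (every such vertex has a neighbor in $V_2(f)$), leaving $v \in A(f) \cup \overline{V_2}(f)$. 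Together with the untouched $V_1(f)$ this yields $V_1(\tilde f) \subseteq A(f) \cup V_1(f) \cup \overline{V_2}(f)$, and the value-$0$ condition is immediate because $V_0(\tilde f) \subseteq \{v : \hat f(v) = 0\}$.

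The main obstacle, I expect, is the converse direction: the input $(G, \hat f, \overline{V_2}(f))$ handed to the solver is strictly weaker information than $f$ itself—it forgets, for instance, the distinction between $V_0(f)$ and $A(f)$—so one must argue that the grdf invariants compensate for this loss and prevent the solver from producing a $\tilde f$ that, say, promotes a vertex of $V_1(f)$ to value $2$. The crucial leverage is invariant~(2), which blocks the propagation of $2$'s into $V_1(f)$, together with invariant~(1), which prevents the final for-loop from assigning $1$ to any vertex of $V_0(f) \cup \overline{V_1}(f)$.
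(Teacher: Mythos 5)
Your proposal is correct and follows essentially the same route as the paper's proof: both directions reduce to \autoref{lem:GenExtRD}, the forward direction checks $\hat f\leq g$ and $V_2(g)\cap\overline{V_2}(f)=\emptyset$ directly from the consistency definition, and the converse analyzes the concrete output of \textsc{GenExtRD Solver} using the first and the second/third grdf invariants (your use of invariant~(2) where the paper cites invariant~(3) is immaterial, as they express the same restriction on edges between $V_1(f)$ and $V_2(f)$).
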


The following proof makes use of the grdf invariants presented above, which are only formally proved to hold in the next subsection, in \autoref{prop:invariants}.

\begin{pf} We have to show two assertions: (1) If \textsc{GenExtRD  Solver} answers \emph{yes} on the instance $(G,\hat f,\overline{V_2}(f))$, then there exists a minimal rdf $g: V \to \lbrace0,1,2\rbrace$ that is consistent with $f$. (2) If there exists a minimal rdf $g: V \to \lbrace0,1,2\rbrace$ that is consistent with $f$, then \textsc{GenExtRD  Solver} answers \emph{yes} on the instance $(G,\hat f,\overline{V_2}(f))$. 

\smallskip\noindent
\underline{ad (1)}: Assume  \textsc{GenExtRD  Solver} found a minimal rdf $g$ such that $\hat f\leq g$ and $V_2(g)\cap \overline{V_2}(f)=\emptyset$. Let $v\notin A(f)$. First assume that $g(v)=2$. Clearly, vertices in $V_2(f)=V_2(\hat f)$ do not get changed, as they cannot be made bigger. Hence, assume $v\notin V_2(f)$ exists with $g(v)=2$. As \textsc{GenExtRD Solver} will only explicitly set the value~2 for vertices originally set to~1 (by their $f$-assignment) that are in the neighborhood of vertices already set to value~2 and that do not belong to $\overline{V_2}(f)$, we have to reason about a possible $v\in V_1(f)=V_1(\hat f)$. 
By the third grdf invariant, the neighborhood of $v$ contains no vertex from $V_2(f)=V_2(\hat f)$, so that the case of some  $v\notin V_2(f)$  with $g(v)=2$ can be excluded.

Secondly, assume that $g(v)=1$. The case $v\in V_1(f)$ is not critical, and $v\in V_2(f)$ is not possible, as reasoned above. Notice that $g(v)=1$ was set in the last lines of the algorithm. In particular, $N_G(v)\cap V_2(g)=\emptyset$. As $V_2(f)\subseteq V_2(g)$, also $N_G(v)\cap V_2(f)=\emptyset$. By the first grdf invariant, 
$v\notin \overline{V_1}(f)\cup V_0(f)$. Hence, only $v\in \overline{V_2}(f)$ remains as a possibility.

Thirdly, assume that $g(v)=0$. As $f(v)\in\{1,2\}$ is clearly impossible, $v\in V_0(f)\cup \overline{V_1}(f)\cup \overline{V_2}(f)$ must follow.
Hence, $g$ is consistent with $f$.

\smallskip\noindent
\underline{ad (2)}: Assume that there exists a minimal rdf $g: V \to \lbrace0,1,2\rbrace$ that is consistent with $f$. We have to prove that $\hat f\leq g$ and that $V_2(g)\cap \overline{V_2}(f)=\emptyset$, because then \textsc{GenExtRD Solver} will correctly answer \emph{yes} by \autoref{lem:GenExtRD}.
For $g(v)=2$, then consistency implies $f(v)\neq \overline{2}$, and trivially $\hat f(v)\leq g(v)$. 
For $g(v)=1$, $v\in A(f)\cup V_1(f)\cup \overline{V_2}(f)$, and hence $\hat f(v)\in\{0,1\}$, so that $\hat f(v)\leq g(v)$. 
If $g(v)=0$, then $v\in A(f)\cup V_0(f)\cup \overline{V_1}(f)\cup \overline{V_2}(f)=V_0(\hat f)$, so that again  $\hat f(v)\leq g(v)$. 
\end{pf}

An important consequence of the previous proposition is stated next.
Notice that our algorithm behaves quite differently from what is known about algorithms the enumerate minimal ds.

\begin{proposition}\label{prop:poly-delay}
Procedure \textsc{Refined RD Enumeration}, on input $G=(V,E)$, outputs functions $f:V\to\{0,1,2\}$ with polynomial delay.
\end{proposition}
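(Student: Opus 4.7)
The plan is to bound by a polynomial the time between two consecutive outputs of \textsc{Refined RD Enumeration}. This relies on two ingredients: a linear bound on the depth of the search tree, and the fact that every recursive call is guarded by \textsc{GenExtRD Solver} and so is guaranteed to produce at least one output in its subtree.

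First I would bound the recursion depth by $n=|V|$. The branching vertex $v$ is chosen from $A(f)\cup\overline{V_1}(f)$; in the $f_2$-branch, $v$ joins $V_2$, and in the $f_{\overline{2}}$-branch, $v$ joins $\overline{V_2}$ (if $v\in A(f)$) or $V_0$ (if $v\in\overline{V_1}(f)$). In either case $v$ leaves $A(f)\cup\overline{V_1}(f)$, and since the reduction rules only refine vertex assignments (they move vertices to more constrained states, never back into $A$ or $\overline{V_1}$), the quantity $|A(f)|+|\overline{V_1}(f)|$ strictly decreases along every root-to-leaf path. Starting from $|A(f_\bot)|+|\overline{V_1}(f_\bot)|=n$ at the initial call, the search tree therefore has depth at most~$n$.

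Second, by \autoref{lem:GenExtRD} combined with the preceding proposition, the guard $\textsc{GenExtRD Solver}(G,\hat f,\overline{V_2}(f))$ returns \emph{yes} exactly when there exists a minimal rdf consistent with~$f$. Since the procedure only recurses when this test succeeds, every internal node of the search tree lies on some root-to-output path, i.e.\ each recursive call is certified to eventually produce at least one minimal rdf as output. Thus, if $\ell_1$ and $\ell_2$ are the leaves corresponding to two consecutive outputs, the tree-walk between them goes up from $\ell_1$ to their lowest common ancestor and then down to $\ell_2$, visiting at most $2n$ nodes.

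At each visited node the work consists of applying the polynomial-time reduction rules and performing one call to \textsc{GenExtRD Solver}, which runs in $\mathcal{O}(n^3)$ time by the obvious adaptation of \autoref{propos:runtime}. Hence the delay between consecutive outputs is $\mathcal{O}(n)\cdot\mathrm{poly}(n)=\mathrm{poly}(n)$, and the same bound controls both the time until the first output and the time after the last. The main subtlety I anticipate is verifying that the guard really prunes only dead branches---that \textsc{GenExtRD Solver}$(G,\hat f,\overline{V_2}(f))$ never incorrectly answers \emph{no} when a consistent minimal rdf exists---which is exactly the content of the preceding proposition and ultimately depends on the grdf invariants being preserved by the reduction rules.
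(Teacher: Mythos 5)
Your proposal is correct and follows essentially the same route as the paper's own proof: polynomial work per node (reduction rules plus one \textsc{GenExtRD Solver} call), the guard guaranteeing that every explored subtree yields at least one output, and a depth bound of $|V|$ giving at most $2|V|$ nodes between consecutive outputs. Your explicit justification of the depth bound via the non-increase of $|A(f)|+|\overline{V_1}(f)|$ under reductions is a slightly more careful rendering of what the paper asserts directly, but it is the same argument.
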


\begin{pf} Although the reduction rules are only stated in the next subsection, it is not hard to see by quickly browsing through them that they can be implemented to run in polynomial time. Moreover, \textsc{GenExtRD Solver} runs in polynomial time.
Hence, all work done in an inner node of the search tree needs polynomial time only.
By the properties of \textsc{GenExtRD Solver}, the search tree will never continue branching if no outputs are to be expected that are consistent with the current grdf (that is associated to that inner node). Hence, a run of the procedure \textsc{Refined RD Enumeration} dives straight through setting more and more values of a grdf, until it is everywhere defined with values from $\{0,1,2\}$, and then it returns from the recursion and dives down the next promising branch. Clearly, the length of any search tree branch is bounded by $|V|$, so that at most $2|V|$ many inner nodes are visited between any two outputs. This also holds at the very beginning (i.e., only polynomial time will elapse until the first function is output) and at the very end (i.e., only polynomial time will be spent after outputting the last function). 
This proves the claimed  polynomial delay.
\end{pf}

\begin{proposition}
Procedure \textsc{Refined RD Enumeration} correctly enumerates all minimal rdf that are consistent with the input grdf, assuming that at least one consistent rdf exists.
\end{proposition}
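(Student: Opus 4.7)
The plan is to split the verification of \textsc{Refined RD Enumeration} into three obligations: (i) every emitted function is a minimal rdf consistent with the input grdf (soundness); (ii) every minimal rdf consistent with the input grdf is eventually emitted (completeness); and (iii) nothing is output twice. The preceding proposition about \textsc{GenExtRD Solver} is the workhorse for (i) and (ii), while (iii) will follow from the mutual exclusivity of the two branches. I would prove the result by induction on the measure $\mu(G,f)$, which is strictly decreased by each branching (and not increased by reduction), so the recursion terminates and gives us a well-founded induction hypothesis.

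For soundness, I would argue that the algorithm only emits $f$ at a leaf where $f$ is everywhere defined with $f(V)\subseteq\{0,1,2\}$, i.e., $A(f)=\overline{V_1}(f)=\overline{V_2}(f)=\emptyset$. In that case, the consistency clauses force any rdf $g$ consistent with $f$ to satisfy $g(v)=i$ whenever $v\in V_i(f)$ for $i\in\{0,1,2\}$, so $g=f$. Since the call was made only after \textsc{GenExtRD Solver} answered \emph{yes} on $(G,\widehat{f},\overline{V_2}(f))$, the previous proposition guarantees that at least one consistent minimal rdf exists, and this unique candidate must be $f$ itself; hence $f$ is a minimal rdf.

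For completeness, let $g$ be any minimal rdf consistent with the grdf $f$ at the current recursive call. If we are already at a leaf, soundness gives $g=f$ and the output matches. Otherwise, the algorithm picks some $v\in A(f)\cup \overline{V_1}(f)$. Consistency restricts $g(v)$ to $\{0,1,2\}$ if $v\in A(f)$ and to $\{0,2\}$ if $v\in \overline{V_1}(f)$. In the case $g(v)=2$, the grdf $f_2$ is obtained by setting $f(v)\coloneqq 2$, and $g$ is consistent with $f_2$; the previous proposition then guarantees that \textsc{GenExtRD Solver} answers \emph{yes} on the corresponding extension instance, so the $f_2$-branch is taken and, by the inductive hypothesis, emits $g$. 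In the case $g(v)\neq 2$, a symmetric argument shows that $g$ is consistent with $f_{\overline{2}}$ (because in $f_{\overline{2}}$ the vertex $v$ either lies in $\overline{V_2}$, allowing values $0$ or $1$, or in $V_0$, matching the forced value $0$ when $v\in \overline{V_1}(f)$); hence the second branch is pursued and outputs $g$. I would wrap this up with an auxiliary lemma stating that exhaustively applying the reduction rules to a grdf preserves the set of consistent rdfs, so the transition from $f_2$ (respectively $f_{\overline{2}}$) to its reduced form does not destroy the witness~$g$.

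Finally, no duplicates arise because the two branches partition the consistent rdfs according to whether $g(v)=2$ or $g(v)\neq 2$; these cases are mutually exclusive, so a rdf output inside the $f_2$-subtree cannot appear in the $f_{\overline{2}}$-subtree, and by induction each subtree is itself duplication-free. The main obstacle I anticipate is the auxiliary lemma about reduction rules: since the rules themselves are only presented in the next subsection, their invariance with respect to consistency has to be checked rule by rule, and in particular one has to confirm that whenever a rule promotes a vertex to $V_2$ or demotes one to $\overline{V_2}$ this is forced by every consistent minimal rdf. Once that rule-by-rule verification is in place, the induction outlined above closes the proof.
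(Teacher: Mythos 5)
Your proposal is correct and follows essentially the same route as the paper: induction over the recursion, using the correctness of \textsc{GenExtRD Solver} to justify pruning, the observation that a leaf grdf determines a unique consistent function, and the (separately proved) soundness of the reduction rules, which is exactly the auxiliary lemma you flag as the remaining obligation (Proposition~\ref{prop:rdf-reductionrules-soundness} in the paper). Your treatment is more explicit than the paper's rather terse argument, and additionally folds in the no-duplicates claim, which the paper handles in a separate proposition.
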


\begin{pf}
As there exists a consistent rdf, outputting the input function is correct if the input function is already an rdf, which is checked, as we test if the given grdf is everywhere defined at has only images in $\{0,1,2\}$. Also, before \textsc{Refined RD Enumeration} is called recursively, we  explicitly check if  at least one consistent rdf exists.

If the input grdf~$f$ is not everywhere defined or if $\overline{V_2}(f)\cup \overline{V_1}(f)\neq\emptyset$, then $A(f)\cup\overline{V_1}(f)\neq\emptyset$ by the fourth grdf invariant. Hence, whenever \textsc{Refined RD Enumeration} is called recursively, $A(f)\cup\overline{V_1}(f)\neq\emptyset$ holds, as these calls are immediately after applying all reduction rules exhaustively.

Hence, by induction and based on the previous propositions, \textsc{Refined RD Enumeration} correctly enumerates all minimal rdf that are consistent with the input grdf.
\end{pf}

\begin{corollary}\label{cor:correct-enumeration}
Procedure \textsc{Refined RD Enumeration} correctly enumerates all minimal rdf of a given graph $G=(V,E)$ when provided with the nowhere defined grdf $f_\bot$.
\end{corollary}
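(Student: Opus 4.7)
The plan is to derive the corollary almost immediately from the preceding proposition, which states that \textsc{Refined RD Enumeration} correctly enumerates every minimal rdf consistent with its input grdf, under the assumption that at least one such consistent rdf exists. So the task reduces to verifying three things for the nowhere defined input $f_\bot$: (i) the grdf invariants are (vacuously) satisfied so that the proposition is applicable; (ii) at least one minimal rdf of $G$ exists, so that the existence hypothesis of the proposition is met; and (iii) the consistency relation with $f_\bot$ is so permissive that every minimal rdf of $G$ qualifies, so that ``consistent with $f_\bot$'' really means ``arbitrary''.

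First, I would observe that for $f_\bot$ we have $A(f_\bot)=V$ and $V_0(f_\bot)=V_1(f_\bot)=V_2(f_\bot)=\overline{V_1}(f_\bot)=\overline{V_2}(f_\bot)=\emptyset$. All four grdf invariants then hold trivially: the first three quantify over empty sets, and the premise $\overline{V_2}(f_\bot)\neq\emptyset$ of the fourth is false. Hence $f_\bot$ is a legitimate starting grdf for the recursion.

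Second, I would note that minimal rdf of $G$ always exist: the constant-$1$ function is an rdf, and any rdf dominates (in the pointwise order $\leq$) some minimal rdf, since $\{0,1,2\}^V$ is finite. Third, the definition of consistency specialized to $f_\bot$ simply reads ``$g(v)\in\{0,1,2\}$ and $v\in A(f_\bot)$'', which imposes nothing on $g$. Therefore the set of minimal rdf consistent with $f_\bot$ coincides with the set of all minimal rdf of $G$.

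Combining these three observations with the preceding proposition yields the claim. I do not anticipate a real obstacle here; the only subtlety is to make the triviality of the invariants and of the consistency relation explicit, so that the hypotheses of the proposition are seen to be met for the initial call, after which correctness is inherited verbatim.
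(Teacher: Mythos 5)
Your proposal is correct and follows essentially the same route as the paper: both reduce the claim to the preceding proposition by observing that every minimal rdf is consistent with $f_\bot$ and that at least one minimal rdf exists (the paper notes directly that the constant-$1$ function is itself a minimal rdf, whereas you reach existence via finiteness of $\{0,1,2\}^V$; both are fine). Your extra explicit check that the grdf invariants hold vacuously for $f_\bot$ is a harmless addition that the paper delegates to the proof of its invariants proposition.
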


\begin{pf}
Due to the previous proposition, it is sufficient to notice that all minimal rdf are consistent with  $f_\bot$ and that the function that is constant~1 is a minimal rdf consistent with  $f_\bot$.
\end{pf}

\begin{proposition}\label{prop:poly-space}
Procedure \textsc{Refined RD Enumeration} never enumerates any minimal rdf consistent with the given grdf on the input graph $G=(V,E)$ twice.
\end{proposition}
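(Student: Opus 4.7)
The plan is to prove this by structural induction on the recursion tree of \textsc{Refined RD Enumeration}. The base case handles leaves: when $f$ is everywhere defined with image in $\{0,1,2\}$, the procedure outputs only the single function~$f$ and returns, so no repetition can occur inside this call.

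For the inductive step, I would fix an internal call with associated grdf~$f$ and branching vertex $v \in A(f) \cup \overline{V_1}(f)$, and show that the sets of minimal rdf produced by the two recursive children (on $f_2$ and on $f_{\overline{2}}$) are disjoint. The decisive observation is that the two branches disagree on the forced value of~$v$. By construction, $v \in V_2(f_2)$, so by the definition of consistency, every minimal rdf $g$ output from the $f_2$-subtree satisfies $g(v) = 2$. In the other branch, either $v \in \overline{V_2}(f_{\overline{2}})$ (when $v \in A(f)$) or $v \in V_0(f_{\overline{2}})$ (when $v \in \overline{V_1}(f)$); in both cases, the consistency clause for $g(v) = 2$, which requires $v \in A(f_{\overline{2}}) \cup V_2(f_{\overline{2}}) \cup \overline{V_1}(f_{\overline{2}})$, fails, so every minimal rdf $g$ produced from the $f_{\overline{2}}$-subtree satisfies $g(v) \neq 2$. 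These two conditions on $g(v)$ are mutually exclusive, so the two output sets are disjoint. Combining this with the inductive hypothesis applied to each child call yields no repetition across the whole call.

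The only technical point I would need to check carefully is that the exhaustive application of the reduction rules between the branching and the recursive call preserves the above membership: namely, that after reductions $v$ still satisfies $v \in V_2(f_2)$ in the first branch and $v \in \overline{V_2}(f_{\overline{2}}) \cup V_0(f_{\overline{2}})$ in the second. This is immediate from the fact that reduction rules only further constrain a grdf by propagating forced values and never overwrite a value already set by an earlier step; in particular, they never move a vertex out of $V_2$ or back into $A$. I expect this to be the main (still minor) obstacle, because it requires that the reduction rules, presented only informally elsewhere in the paper, are monotone with respect to each block of the partition of~$V$ induced by $f$.

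Finally, to conclude, I would invoke \autoref{cor:correct-enumeration} to note that the top-level call on $f_\bot$ enumerates all minimal rdf of~$G$; combined with the disjointness established above, this shows that no minimal rdf of~$G$ is output more than once.
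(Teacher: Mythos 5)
Your proof is correct and follows essentially the same route as the paper's: the branching fixes, for every descendant output $g$, whether $g(v)=2$ or $g(v)\neq 2$ (via the consistency definition), and the reduction rules cannot undo this, so the two subtrees produce disjoint output sets. One small imprecision: your claim that the reduction rules ``never overwrite a value already set'' is not literally true --- e.g., Reduction Rule NPD moves a vertex from $\overline{V_2}(f)$ into $V_1(f)$, so after reductions $v$ need not remain in $\overline{V_2}(f_{\overline{2}})\cup V_0(f_{\overline{2}})$; the property that actually carries the argument (and that the paper states) is that no reduction rule ever assigns the value~$2$, so a vertex outside $V_2$ stays outside $V_2$ and consistency still forces $g(v)\neq 2$ in that subtree.
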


\begin{pf}
Notice that the enumeration algorithm always branches by deciding for a vertex~$v$ from $A(f)\cup \overline{V_1}(f)$, where $f$ is the current grdf, if $f(v)$ is updated to~$2$ or not, which means that either $v\in A(f)$ is set to $\overline{2}$, or $v\in \overline{V_1}$ is set to~0. Then, reduction rules may apply, but they never change the decision if, in a certain branch of the search tree,  $f(v)=2$ is either true or false.
Moreover, they never set any vertex to~$2$. 
As any  minimal rdf that is ever output in a certain branch will be consistent with the grdf~$f$ associated to an inner node of the search tree, Procedure \textsc{Refined RD Enumeration} never enumerates any minimal rdf twice.
\end{pf}

An important consequence of the last claim is that there is no need to store all output 
functions in order to finally parse them to see into enumerating any of them only once.

\begin{corollary}\label{cor:poly-space}
Algorithm  \textsc{Refined RD Enumeration} lists all minimal rdf consistent with the given grdf on the input graph $G=(V,E)$ without repetitions and in polynomial space.
\end{corollary}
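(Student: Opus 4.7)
The plan is straightforward: this corollary packages together three facts, two of which have already been argued. The claim that the procedure lists all minimal rdf consistent with the given grdf is exactly Corollary~\ref{cor:correct-enumeration}, and the absence of repetitions is Proposition~\ref{prop:poly-space}. Hence the only point that still requires proof is the polynomial-space bound.

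For the space bound, I would first argue that the recursion depth of \textsc{Refined RD Enumeration} is at most $|V|$. Each recursive call picks a branching vertex $v\in A(f)\cup\overline{V_1}(f)$; in the $f_2$ child, $v$ is moved to $V_2(f_2)$, and in the $f_{\overline{2}}$ child, $v$ is moved to $\overline{V_2}(f_{\overline 2})$ (if it was active) or to $V_0(f_{\overline 2})$ (if it was in $\overline{V_1}(f)$). Since the reduction rules only further restrict vertex labels and never reinsert a vertex into $A(f)\cup\overline{V_1}(f)$, the quantity $|A(f)|+|\overline{V_1}(f)|$ strictly decreases with the recursion depth, capping that depth at $|V|$.

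At each active recursion level we need to keep only the current grdf (a map $V\to\{0,1,2,\overline 1,\overline 2\}$, of size $\Oh(|V|)$), together with the polynomial scratch space used by the reduction rules and by a single invocation of \textsc{GenExtRD Solver}; the latter runs in polynomial (in fact cubic) time by Proposition~\ref{propos:runtime} applied to the minor modification described in Lemma~\ref{lem:GenExtRD}, and therefore also uses polynomial space. Multiplying by the $\Oh(|V|)$ stack depth yields an overall polynomial space bound.

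The enabling observation is that, thanks to Proposition~\ref{prop:poly-space}, the procedure does not need to maintain any dictionary of already-output functions in order to suppress duplicates, which is the usual space-blowup culprit in enumeration algorithms. The only delicate step is checking that none of the reduction rules (to be detailed in the following subsection) moves a vertex back into $A(f)\cup\overline{V_1}(f)$, but this will be clear from the monotone character of those rules, which only commit partial assignments further. Combining these points yields the claimed polynomial-space, duplicate-free enumeration.
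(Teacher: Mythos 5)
Your proposal is correct and follows essentially the same route as the paper, which states this corollary without a separate proof: it is meant to follow from Proposition~\ref{prop:poly-space} (no repetitions, hence no need to store previously output functions) together with the boundedness of the search-tree branches already noted in the proof of Proposition~\ref{prop:poly-delay}. Your additional detail — that $|A(f)|+|\overline{V_1}(f)|$ strictly decreases at each branching step and is never increased by the reduction rules (the rule $V_2$ only moves a vertex from $A(f)$ to $\overline{V_1}(f)$, staying within the union), bounding the recursion depth by $|V|$ — is a correct and welcome elaboration of what the paper leaves implicit.
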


\subsection{Details on reductions and branchings}

For the presentation of the following rules, we assume that $G=(V,E)$ and a grdf $f$ is given. We also assume that the rules are executed exhaustively in the given order.

\vspace{5pt}
\noindent
{\bf Reduction Rule LPN (Last Potential Private Neighbor).} If $v\in V_2(f)$ satisfies $|N_G(v)\cap  (\overline{V_2}(f)\cup A(f))|=1$, then set $f(x) = 0$ for $\{x\}=N_G(v)\cap  (\overline{V_2}(f)\cup A(f))$.

\vspace{5pt}
\noindent
{\bf Reduction Rule $V_0$.} Let  $v \in V_0(f)$. Assume there exists a unique $u\in V_2(f)\cap N_G(v)$. 
Moreover, assume that for all $x\in N_G(u)\cap (V_0(f)\cup \overline{V_1}(f)\cup \overline{V_2}(f))$, $|N_G(x)\cap V_2(f)|\geq 2$ if $x\neq v$. Then, for any  $w \in N_G(v)\cap A(f)$, set $f(w) = \overline{2}$ and  for any  $w \in N_G(v)\cap \overline{V_1}(f)$, set $f(w) =0$.

\vspace{5pt}
\noindent
{\bf Reduction Rule $V_1$.} Let  $v \in V_1(f)$. For any  $w \in N_G(v)\cap A(f)$, set $f(w) = \overline{2}$.  For any  $w \in N_G(v)\cap \overline{V_1}(f)$, set $f(w) =0$.

\vspace{5pt}
\noindent
{\bf Reduction Rule $V_2$.} Let  $v \in V_2(f)$. For any  $w \in N_G(v)\cap A(f)$, set $f(w) = \overline{1}$.  For any  $w \in N_G(v)\cap \overline{V_2}(f)$, set $f(w) =0$.

\vspace{5pt}
\noindent
{\bf Reduction Rule NPD (No Potential Domination).} If $v\in \overline{V_2}(f)$ satisfies $N_G(v)\subseteq \overline{V_2}(f)\cup V_0(f)\cup V_1(f)$, then set $f(v) = 1$ (this also applies to isolated vertices in $\overline{V_2}(f)$).

\vspace{5pt}
\noindent
{\bf Reduction Rule NPN (No Private Neighbor).} If $v\in A(f)$ satisfies $N_G(v)\subseteq V_0 \cup\overline{V_1}(f)$, then set $f(v) = \overline{2}$ (this also applies to isolated vertices in $A(f)$).

\vspace{5pt}
\noindent
{\bf Reduction Rule Isolate.} If $A(f)=\emptyset$ and if $v\in \overline{V_1}(f)$ satisfies $N_G(v)\cap\overline{V_2}(f)= \emptyset$, then set $f(v) = 0$.

\vspace{5pt}
\noindent
{\bf Reduction Rule Edges.} If $u,v\in \overline{V_2}(f)\cup V_0(f)\cup V_1(f)$ and $e=uv\in E$, then remove the edge~$e$ from $G$.

\vspace{5pt}
\noindent
In the following, we first take care of the claimed grdf invariants. 

\begin{proposition}\label{prop:invariants}
After exhaustively executing the proposed reduction rules, as indicated in Algorithm~\ref{alg:refined-enum}, the claimed grdf invariants are maintained.
\end{proposition}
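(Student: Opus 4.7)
The plan is to proceed by structural induction on the sequence of operations that produce the current grdf: starting from the nowhere-defined function $f_\bot$ (for which $A(f_\bot)=V$ and all five labelled sets are empty, so every invariant is vacuously true), I would show that each branching step followed by exhaustive application of the reduction rules preserves the four invariants. The induction hypothesis is applied to the grdf entering \textsc{Refined RD Enumeration} and then to the grdfs $f_2$ and $f_{\overline{2}}$ after their respective reductions. For each rule I would verify the four invariants in turn; most cases are a one-line check, so I would spell out only the subtle ones.

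Rules \textbf{$V_1$} and \textbf{$V_2$} are tailor-made to enforce invariants~3 and~2 respectively: they rewrite every neighbor of a $V_1$- or $V_2$-labelled vertex that is not yet in the permitted neighborhood class. The vertices newly placed into $\overline{V_1}$, $\overline{V_2}$ or $V_0$ by these rules always sit adjacent to the triggering $V_2$ or $V_1$ vertex, so the first invariant for the new $\overline{V_1}$ and $V_0$ entries is immediate. Rule \textbf{LPN} moves a unique potential private neighbor $x$ from $A(f)\cup\overline{V_2}(f)$ into $V_0(f)$; since $x$ is adjacent to the triggering $v\in V_2(f)$, invariant~1 is preserved, and invariants~2 and~3 are untouched. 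Rule \textbf{$V_0$} propagates the forbidden-2 status along a forced private neighbor and again creates only $\overline{V_2}$ or $V_0$ entries adjacent to already-classified vertices. Rule \textbf{NPD} converts a $\overline{V_2}$ vertex whose neighborhood is already contained in $\overline{V_2}(f)\cup V_0(f)\cup V_1(f)$ into a $V_1$ vertex, so invariant~3 is exactly the hypothesis of the rule. Rule \textbf{Isolate} removes an $\overline{V_1}$ vertex from the $\overline{V_1}$ class only when no $\overline{V_2}$-neighbor exists; its preexisting $V_2$-neighbor (guaranteed by invariant~1 when it entered $\overline{V_1}$) is still present, so invariant~1 survives when it joins $V_0$. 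Rule \textbf{Edges} only deletes edges between vertices in $\overline{V_2}(f)\cup V_0(f)\cup V_1(f)$, and none of invariants~1--3 refer to any edge with both endpoints in that union.

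The subtle point, and the place I expect the main obstacle, is invariant~4. The rules that can create a fresh $\overline{V_2}$ vertex are \textbf{NPN}, \textbf{$V_1$}, \textbf{$V_0$}, and the branching assignment $f_{\overline{2}}(v)\coloneqq\overline{2}$ in Algorithm~\ref{alg:refined-enum}. For each such creation I would argue that either some vertex of $A(f)\cup\overline{V_1}(f)$ simultaneously survives, or else the subsequent exhaustive reduction fires \textbf{NPD} on the newly created $\overline{V_2}$ vertex and removes it from $\overline{V_2}$. Concretely, if rule NPN moves the last element $v$ out of $A(f)$ into $\overline{V_2}(f)$ with $\overline{V_1}(f)=\emptyset$, then by hypothesis $N_G(v)\subseteq V_0(f)\cup\overline{V_1}(f)=V_0(f)$, hence also $N_G(v)\subseteq\overline{V_2}(f)\cup V_0(f)\cup V_1(f)$, which triggers \textbf{NPD} on~$v$ and empties $\overline{V_2}$ again. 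An analogous analysis handles $V_1$, $V_0$ and the branching; the order of the rules, as prescribed, guarantees that by the time the algorithm exits the reduction phase, invariant~4 holds.

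The remaining task is to verify that the branching assignments themselves do not leave the earlier invariants broken in a way the reductions cannot repair. Assigning $f_2(v)\coloneqq 2$ trivially preserves invariants~1, 3 and~4 and invariant~2 is restored by one pass of rule $V_2$; assigning $f_{\overline{2}}(v)\coloneqq 0$ for $v\in\overline{V_1}(f)$ keeps the $V_2$-neighbor guaranteed for $v$ by the previous invariant~1 and does not enlarge any class relevant to invariants~2, 3 or~4; assigning $f_{\overline{2}}(v)\coloneqq\overline{2}$ for $v\in A(f)$ is handled by the $\overline{V_2}$ analysis above. Putting all of this together completes the induction and hence the proposition.
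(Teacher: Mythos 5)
Your proposal is correct and follows essentially the same inductive, rule-by-rule verification as the paper for invariants 1--3. The one place where you genuinely diverge is invariant 4: you track every event that can \emph{create} a $\overline{V_2}$-vertex (NPN, $V_1$, $V_0$, the branching assignment) and argue case by case that either a witness in $A(f)\cup\overline{V_1}(f)$ survives or Reduction Rule NPD immediately reclassifies the new $\overline{V_2}$-vertex; you work out only the NPN case and defer the rest to analogy. The paper instead argues directly from the \emph{exhausted} state, with no case analysis over creation sites: take any $x\in\overline{V_2}(f)$ after all rules have been applied; by the already-established second invariant $N_G(x)\cap V_2(f)=\emptyset$, by Reduction Rule Edges $N_G(x)\cap\bigl(\overline{V_2}(f)\cup V_0(f)\cup V_1(f)\bigr)=\emptyset$, and since NPD did not fire (which covers the isolated case too) $N_G(x)\neq\emptyset$, so $x$ itself exhibits a neighbor in $A(f)\cup\overline{V_1}(f)$. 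Your mechanism is the right one --- NPD is indeed the safety net --- but the paper's formulation is more uniform and spares you verifying the ``analogous'' cases for $V_1$, $V_0$ and the branching, where the bookkeeping (e.g., whether $A(f)$ can be emptied by a rule acting only on neighbors of a single vertex) gets fiddly; if you keep your route, those cases do need to be written out.
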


\begin{pf}We argue for the correctness of the grdf invariants  by induction one by one. Notice that (trivially) all invariants hold if we start the algorithm with the nowhere defined grdf.
\begin{enumerate}
    \item $\forall x\in \overline{V_1}(f)\cup V_0(f)\,\exists y\in N_G(x):y\in V_2(f)$.
    
    We need to show  that $N_G(x) \cap V_2(f) \neq \emptyset$ holds for each $x\in V_0(f) \cup \overline{V_1}(f)$. For the inductive step, we only have to look at the reduction rules, since the branching rules only change the value to $0$ if the vertex was already in $\overline{V_1}(f)$. For each reduction rule where we set a value of a vertex to~$0$ or to~$\overline{1}$, there exists a vertex in the neighborhood with value~$2$, which is seen as follows.

\begin{tabular}{rl}
    LPN:& We explicitly consider $x\in N_G(V_2(f))$ only to be set by $f(x)=0$.\\
    $V_0$\,\&\,$V_1$:& We only set $w$ to $0$ if it has been in $\overline{V_1}$. By induction  hypothesis,  \\
    &$w$ has a neighbor in $V_2(f)$.\\
    $V_2$:& We explicitly consider $w\in N_G(V_2(f))$ only to be set to~$0$ or to~$\overline{1}$.\\
    Isolate:& Only vertices from $\overline{V_1}(f)$ are set to~0; apply induction hypothesis. 
\end{tabular}
    
    \item $\forall x\in V_2(f):N_G(x)\subseteq \overline{V_1}(f)\cup V_0(f) \cup V_2(f)$.
    
    This property can only be invalidated if new vertices get the value~2 or if vertices from  $\overline{V_1}(f)\cup V_0(f) \cup V_2(f)$ are changed to a value other than this or if edges are deleted (as vertices are never deleted). The only way in which a vertex gets~$v$ the value~2 is by branching. Immediately afterwards, the reduction rules are executed: LPN and $V_2$ will install the invariant for the neighborhood of~$v$. No reduction rule ever changes the value of a vertex from $V_0(f) \cup V_2(f)$, while vertices from $\overline{V_1}(f)$ might be set to~$0$ or~$2 $. The Reduction Rule Edges deletes no edges incident to vertices from $V_2(f)$. 
    \item $\forall x\in V_1(f):N_G(x)\subseteq \overline{V_2}(f)\cup V_0(f)\cup V_1(f)$.
    
    The invariant is equivalent to the following  three conditions: (a) $N(V_1(f))\cap A(f)=\emptyset$, (b) $N(V_1(f))\cap \overline{V_1}(f)=\emptyset$ and (c) $N(V_1(f))\cap V_2(f) =\emptyset$. Conditions (a) and (b) are taken care of by Reduction Rule $V_1$.
    Condition (c) immediately follows by the already proven second invariant.
    \item If $\overline{V_2}(f)\neq\emptyset$, then $A(f)\cup \overline{V_1}(f)\neq \emptyset$.
    
    Consider some $x\in \overline{V_2}(f)$. By the second invariant, $N_G(x)\cap V_2(f)=\emptyset$. By the Reduction Rule Edges, $N_G(x)\cap \left(\overline{V_2}(f)\cup V_0(f)\cup V_1(f)\right)=\emptyset$. As the Reduction Rule NPD did not apply, the only possible neighbors of $x$ are in $A(f)\cup \overline{V_1}(f)$.\qed
\end{enumerate}
\renewcommand{\qed}{}
\end{pf}

We have now to show the \emph{soundness} of the proposed reduction rules. In the context of enumerating minimal rdf, this means the following: if $f,f'$ are grdf of $G=(V,E)$ before or after applying any of the reduction rules, then $g$ is a minimal rdf that is consistent with $f$ if and only if it is consistent with $f'$.

\begin{proposition}\label{prop:rdf-reductionrules-soundness}
All proposed reduction rules are sound.
\end{proposition}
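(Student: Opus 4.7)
\begin{pf}
The plan is to check each of the eight rules separately, using the three-part characterization in \autoref{t_porperty_min_rdf} as the main criterion and the grdf invariants from \autoref{prop:invariants} as auxiliary tools. For each rule I would argue: if $f$ is the grdf before the rule fires and $f'$ the grdf produced by the rule, then any $g\in\{0,1,2\}^V$ that is a minimal rdf consistent with $f$ is already consistent with $f'$. The converse direction, consistency with $f'$ implies consistency with $f$, is immediate because every rule only narrows the set of values a vertex may still take (and the edge-removal rule only touches edges that play no role anyway).

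The rules $V_1$ and $V_2$ would follow directly from Condition~\ref{con_1_2}: for $v\in V_1(f)$ we have $g(v)=1$, so no neighbor may lie in $V_2(g)$, and combining this with the meaning of the marks $\overline{V_1}, \overline{V_2}$ yields exactly the claimed updates; the argument for $V_2$ is symmetric. Rule NPD is similar but leans on the second grdf invariant: a vertex $v\in\overline{V_2}(f)$ whose entire neighborhood sits in $\overline{V_2}(f)\cup V_0(f)\cup V_1(f)$ has no possible $V_2(g)$-neighbor (such a neighbor could only come from $V_2(f)\cup A(f)\cup\overline{V_1}(f)$), so $g(v)=0$ would violate the domination clause and $g(v)=1$ is forced. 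The Edges rule is sound because every edge it removes joins two vertices of $g$-value in $\{0,1\}$; such edges appear neither in the domination check coming from Condition~\ref{con_min_dom} nor in the induced subgraph $G'$ that drives the privacy test of Condition~\ref{con_private}.

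The remaining rules NPN, Isolate, $V_0$ and LPN are privacy-based and all rest on Condition~\ref{con_private} together with the first grdf invariant, which guarantees that every vertex of $\overline{V_1}(f)\cup V_0(f)$ already has some $V_2(f)$-neighbor, hence cannot be private to a vertex outside $V_2(f)$. For NPN, setting $v\in A(f)$ to value~$2$ in $g$ would oblige $v$ to have a private neighbor in $V_0(g)$; but each neighbor of $v$ lies in $V_0(f)\cup\overline{V_1}(f)$ and hence already has a distinct $V_2(f)$-neighbor different from $v$, so no private neighbor can exist and $g(v)\neq 2$, justifying $f(v)=\overline{2}$. The Isolate rule combines the same privacy observation with the hypothesis $A(f)=\emptyset$ to preclude $g(v)=2$ for $v\in \overline{V_1}(f)$ without $\overline{V_2}$-neighbors, leaving only $g(v)=0$. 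The $V_0$ and LPN rules are, as expected, the main obstacle: for a distinguished vertex $u\in V_2(f)$ they force a particular neighbor to act as $u$'s private neighbor in any consistent minimal~$g$. The plan is to fix such a $g$, invoke Condition~\ref{con_private} to locate a private neighbor of $u$ in $g$, and use the respective uniqueness hypothesis (unique $V_2(f)$-witness $u$ for $v$ paired with the $\geq 2$-witness condition on the other candidates in the $V_0$ rule; unique flexible neighbor in $\overline{V_2}(f)\cup A(f)$ for LPN) together with invariant~1 to conclude that the marked vertex is inevitably that private neighbor. Once this is established, the labels prescribed by the rule on the remaining neighbors follow because any additional vertex of $V_2(g)$ adjacent to the forced private neighbor would destroy its privacy, so those neighbors cannot take value~$2$ in~$g$.
\end{pf}
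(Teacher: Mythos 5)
Your proposal is correct and follows essentially the same route as the paper: a rule-by-rule check against the characterization of \autoref{t_porperty_min_rdf} using the grdf invariants, with the privacy condition driving LPN, $V_0$, NPN and Isolate, Condition~\ref{con_1_2} driving $V_1$ and $V_2$, and the direct domination arguments for NPD and Edges. The one detail the paper supplies that your sketch elides is in Rule LPN: to rule out the $V_0(f)\cup\overline{V_1}(f)$-neighbors of $v$ as private neighbors, invariant~1 alone does not suffice (their guaranteed $V_2(f)$-neighbor could a priori be $v$ itself), and the paper closes this by observing that $v$ can only have entered $V_2(f)$ through the immediately preceding branching step, so the invariant held beforehand with a witness different from~$v$.
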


\begin{pf}
For the soundness of the reduction rules, we also need the invariants proven to be correct in \autoref{prop:invariants}.
We now prove the soundness of each reduction rule, one at a time.

If possible, we apply Reduction Rule LPN first. Consider $v\in V_2(f)$ with $\{x\}=N_G(v)\cap  (\overline{V_2}(f)\cup A(f))$. Before the branching step, due to the second invariant, neighbors of $V_2(f)$-vertices are either in $V_0(f)$, $V_2(f)$ or in $\overline{V_1}(f)$. As no reduction rule adds a vertex to $V_2(f)$, $v$ must have been put into $V_2(f)$ by the last branching step.
By the first invariant, we know that all $y\in N_G(v)\cap\left( \overline{V_1}(f)\cup V_0(f)\right)$ are dominated by vertices different from~$v$. As  $v\in V_2(f)$, it still needs a private neighbor to dominate. As $N_G(v)\cap  (\overline{V_1}(f)\cup A(f))$ contains one element~$x$ only, setting $f(x)=0$ is enforced for any minimal rdf (see \autoref{t_private_neighborhood}).

Next, we prove Reduction Rule $V_0$. We consider $v\in V_0(f)$ and $u\in V$ with $\{u\}=V_2(f)\cap N_G(v)$. We can use the rule, since $u$ needs a private neighbor which can only be $v$, by the assumption that every other neighbor of~$u$ is dominated at least twice. To maintain the property that $v$ is a private neighbor of~$u$, each $A(f)$-neighbor of~$v$ is set to~$\overline{2}$ and each $\overline{V_1}(f)$-neighbor of $v$ is set to~$0$. This annotates the fact that any minimal rdf~$g$ compatible with~$f$ will satisfy $(g(x)=2\implies x=u)$ for each $x\in N_G(v)$.

The soundness of Reduction Rule $V_1$ and Reduction Rule $V_2$ mainly follows from \autoref{t_1_2_neigborhood}. 

Coming to the Reduction Rule NPD, notice that setting $f(v)=0$ would necessitate $f(u)=2$ for some neighbor $u$ of $v$, which is impossible.

For the Reduction Rule NPN, we use the fact that $N_G(v) \cap V_2(f) \neq \emptyset$ holds for each $v\in V_0(f) \cup \overline{V_1}(f)$, which is the first invariant.\footnote{More precisely, we also have to check that the possibly newly introduced vertices in $V_0(f)$ or $\overline{V_1}(f)$ by the branching or by the reduction rules up to this point do maintain the invariant, but this is nothing else then re-checking the induction step of the correctness proof of this invariant, see the proof of \autoref{prop:invariants}.}  This implies that  there is no element left for $v$ to dominate (therefore it has no private neighbor except itself). Thus, if $v$ has the value $2$, then it would contradict with \autoref{t_private_neighborhood}.

For the soundness of Reduction Rule Isolate, we note that, since $A(f)$ is empty, $v\in \overline{V_1}(f)$ can only have neighbors in $V_1(f)\cup V_2(f)\cup \overline{V_1}(f)$, as $\overline{V_2}(f)$-neighbors are prohibited. As  Reduction Rule $V_1$ was (if possible) executed before, $N_G(v)\cap V_1(f)= \emptyset$ holds. Therefore, $v\in \overline{V_1}(f)$ would not have a private neighbor if $f(v)=2$, cf. the first invariant.\footnote{Again, one has to partially follow the induction step of the proof of \autoref{prop:invariants}.}

Finally, the soundness of Reduction Rule Edges follows trivially from the fact that an element of $\overline{V_2}(f) \cup V_0(f) \cup V_1(f)$ cannot dominate any vertex in $\overline{V_2}(f) \cup V_0(f) \cup V_1(f)$. Hence, a minimal rdf $g$ is consistent with $f$ if and only it is consistent with $f'$, obtained by applying the Reduction Rule Edges to~$f$.
\end{pf}

In order to fully understand Algorithm~\ref{alg:refined-enum}, we need to describe priorities for branching.
We describe these priorities in the following in decreasing order for a vertex $v\in A(f)\cup \overline{V_1}(f)$.
\begin{enumerate}
    \item $v\in A(f)$ and $|N_{G}(v)\cap (A(f)\cup \overline{V_2}(f))|\geq 2$;
    \item any $v\in A(f)$;
    \item any $v\in \overline{V_1}(f)$, preferably if $|N_{G}(v)\cap \overline{V_2}(f)|\neq 2$.
\end{enumerate}

These priorities also split the run of our algorithm into phases, as whenever the algorithm was once forced to pick a vertex according to some lower priority, there will be never again the chance to pick a vertex of higher priority thereafter.
It is useful to collect some \textbf{phase properties} that instances must satisfy after leaving Phase~$i$, determined by applying the $i^\text{th}$ branching priority.

\begin{itemize}
\item Before entering any phase, there are no edges between vertices $u,v$ if $u,v\in V_0(f)\cup V_1(f)\cup \overline{V_2}(f)$ or if $u\in V_2(f)$ and $v\in \overline{V_2}(f)\cup A(f)$ or if $u\in V_1(f)$ and  $v\in \overline{V_1}(f)\cup A(f)$, as we can assume that the reduction rules have been exhaustively applied.
    \item After leaving the first phase, any active vertex with an active neighbor is either pendant or has only further neighbors from $\overline{V_1}(f)\cup V_0(f)$.
    
    \item After leaving the second phase, $A(f)=\emptyset$ and $N_G(\overline{V_2}(f))\subseteq \overline{V_1}(f)$. Moreover, any vertex $x\in \overline{V_2}(f)$ has neighbors in $\overline{V_1}(f)$.
    \item After leaving the third phase, $A(f)=\overline{V_2}(f)=\overline{V_1}(f)=\emptyset$, so that $f$ is a Roman dominating function. 
\end{itemize}

\begin{proposition}
The phase properties hold.
\end{proposition}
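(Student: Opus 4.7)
The plan is to handle the four phase properties one at a time, in the order they are stated, each time exploiting the assumption that the reduction rules have been run to fixpoint after the latest branching and the grdf invariants verified in \autoref{prop:invariants}.

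For the pre-phase item I would simply do a case analysis on the forbidden edge types: Reduction Rule Edges removes every edge inside $\overline{V_2}(f)\cup V_0(f)\cup V_1(f)$; Reduction Rule $V_2$ strips any $A(f)$- or $\overline{V_2}(f)$-neighbor of a $V_2(f)$-vertex (changing it to $\overline{1}$ or $0$); and Reduction Rule $V_1$ does the analogous job for $V_1(f)$-vertices. For the end of Phase~1, I would fix an active $v$ with an active neighbor $u$; failure of the Phase~1 priority guarantees $|N_G(v)\cap (A(f)\cup \overline{V_2}(f))|\leq 1$, so $u$ is the only such neighbor and all other neighbors of $v$ sit outside $A(f)\cup \overline{V_2}(f)$. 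Since $v\in A(f)$, Rules $V_1$ and $V_2$ have already forbidden $V_1(f)$- and $V_2(f)$-neighbors of $v$, so any further neighbor lies in $\overline{V_1}(f)\cup V_0(f)$; otherwise $v$ is pendant.

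For the end of Phase~2, I would note that since none of the Phase~1 or Phase~2 priorities apply, $A(f)=\emptyset$. Any $v\in \overline{V_2}(f)$ then has no $V_2(f)$-neighbor (second grdf invariant), no $V_0(f)\cup V_1(f)\cup \overline{V_2}(f)$-neighbor (Reduction Rule Edges), and no $A(f)$-neighbor; hence $N_G(v)\subseteq \overline{V_1}(f)$. Moreover, $v$ cannot be isolated in $\overline{V_1}(f)$: otherwise $N_G(v)\subseteq \overline{V_2}(f)\cup V_0(f)\cup V_1(f)$ trivially holds and Reduction Rule NPD would have set $f(v)=1$. For the end of Phase~3, I would combine the following facts: Phase~3 not firing yields $\overline{V_1}(f)=\emptyset$; $A(f)=\emptyset$ persists since no reduction or branching rule re-activates vertices; and by the previous phase property every $v\in \overline{V_2}(f)$ now has an empty neighborhood, so Reduction Rule NPD removes it, giving $\overline{V_2}(f)=\emptyset$ as well. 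Since the image of $f$ then lies in $\{0,1,2\}$, and the first grdf invariant supplies a $V_2(f)$-neighbor for every $V_0(f)$-vertex, $f$ is a Roman dominating function.

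The main obstacle I expect is keeping the dependencies between properties and rules straight, since each later phase property leans on the preceding one and on particular rules having already fired. I would therefore make explicit at each step which grdf invariant from \autoref{prop:invariants} and which reduction rule is being invoked, and I would double-check that no rule is implicitly unblocked only partway through the argument.
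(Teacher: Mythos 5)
Your proposal is correct and follows essentially the same route as the paper: the pre-phase claim via Reduction Rules Edges, $V_1$ and $V_2$; the Phase-1 claim from the failed branching priority plus Rules $V_1$/$V_2$; the Phase-2 claim from $A(f)=\emptyset$, Rule Edges, the second invariant, and NPD for non-isolation; and the Phase-3 claim by emptying $\overline{V_1}(f)$ and then reusing the Phase-2 property together with NPD to empty $\overline{V_2}(f)$. The only cosmetic difference is that you cite the second grdf invariant where the paper cites Reduction Rule $V_2$ directly, which amounts to the same fact.
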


\begin{proof} We are considering the items on the list separately.
\begin{itemize}\item Reduction Rule Edges shows the first claim. Reduction Rules $V_2$ and $V_1$ show the other two claims.
\item 
By the branching condition, we know that after leaving the first phase, $|N_{G}(v)\cap (A(f)\cup \overline{V_2}(f))|< 2$ for any active vertex~$v$. Since $v$ has a neighbor in $A(f)$ (say $u$) this implies that there cannot be any other neighbor in $A(f)\cup \overline{V_2}(f)$. Moreover, by the Reduction Rule $V_1$, $N_G(v)\cap V_1(f)=\emptyset$, and by the Reduction Rule $V_2$, $N_G(v)\cap V_2(f)=\emptyset$. Hence, $N_G(v) \setminus \lbrace u\rbrace \subseteq \overline{V_1}(f)\cup V_0(f)$.
\item The second phase branches on each $v\in A(f)$. Therefore, it ends if $A(f)=\emptyset$.  Let $v\in \overline{V_2}(f)$. By Reduction Rule Edge, we get $N_G(v) \cap \left( \overline{V_2}(f) \cup V_0 \cup V_1 \right)=\emptyset$. Reduction Rule $V_2$ implies that $v$ does not have a neighbor in $V_2(f)$. Therefore we get $N_G(v) \subseteq \overline{V_1}(f)$. If $N_G(v)$ is empty, Reduction Rule NPD will be triggered. Therefore, $N_G(v)$ has at least one element.
\item The third phase runs on the vertices in $\overline{V_1}(f)$. Thus, $\overline{V_1}(f)=\emptyset$ holds at the end of this phase. Since we never put a vertex into $A(f)$ again, $A(f)$ is empty. To get $\overline{V_2}(f) = \emptyset$, we can use the same argumentation as in the property before, since a vertex goes only from $A(f)$ to $\overline{V_2}(f)$.\qed 
\end{itemize}
\end{proof}

\subsection{A Measure \& Conquer Approach}

We now present the branching analysis, classified by the described branching priorities.
We summarize a list of all resulting branching vectors in \autoref{tab:branching-vectors}.

\begin{table}[tb]
    \centering
    \begin{tabular}{|l|l|}\hline
         Phase \#&  Branching vector\\\hline 
         1.1 & $(1-\omega_2,3-2\omega_1)$\\
         1.2 & $(1-\omega_2,1+2\omega_2)$\\
         1.3 & $(1-\omega_2,2+\omega_2-\omega_1)$\\\hline
         2.1 \& 2.2.b & $(1-\omega_2,2)$\\
         2.2.a & $(1-\omega_2,1+\omega_2+\omega_1)$\\
         2.2.c & $(1+\omega_2,1)$\\\hline
         3.1 & $(\omega_1,\omega_1+3\omega_2)$\\
         3.2.a & $(\omega_1,2\omega_1+\omega_2)$\\
         3.2.b \& 3.3.a & $(\omega_1+\omega_2,\omega_1+\omega_2)$\\3.3.b & $(2\omega_1+2\omega_2,2\omega_1+2\omega_2,2\omega_1+2\omega_2,2\omega_1+2\omega_2)$\\\hline
    \end{tabular}
    \caption{The branching vectors of different branching scenarios of the enumeration algorithm for listing all minimal Roman domination functions of a given graph}
    \label{tab:branching-vectors}
\end{table}

\subsubsection{Branching in Phase~1.}
We are always branching on an active vertex $v$. In the first branch, we set $f(v) = 2$.
In the second branch, we set $f(v) = \overline{2}$.
In the first branch, in addition the Reduction Rule $V_2$ triggers at least twice. In order to determine a lower bound on the branching vector, we describe three worst-case scenarios; all other reductions of the measure can be only better.
\begin{enumerate}
    \item $|N_G(v)\cap A(f)|=2$, i.e., $v$ has two active neighbors $x$ and $y$.  The corresponding recurrence is: $T(\mu) = T(\mu-(1-\omega_2))+T(\mu-(1+2(1-\omega_1)))$, as either $v$ moves from $A(f)$ to $V_2(f)$ and $x,y$ move from $A(f)$ to $\overline{V_1}(f)$, or $v$ itself moves from $A(f)$ to $\overline{V_2}(f)$. The branching vector is hence: $(1-\omega_2,3-2\omega_1)$, as noted in the first row of \autoref{tab:branching-vectors}.
    \item $|N_G(v)\cap \overline{V_2}(f)|=2$. The corresponding recurrence is:  $T(\mu) = T(\mu-(1-\omega_2))+T(\mu-(1+2\omega_2))$, see the second row of \autoref{tab:branching-vectors}.
    \item $|N_G(v)\cap A(f)|=1$ and $|N_G(v)\cap \overline{V_2}(f)|=1$, leading to  $T(\mu) = T(\mu-(1-\omega_2))+T(\mu-(1+(1-\omega_1)+\omega_2))$, see \autoref{tab:branching-vectors}, third row.
\end{enumerate}

\subsubsection{Branching in Phase~2.}
We are again branching on an active vertex $v$. By Reduction Rule NPN, we can assume that $N_G(v)\neq\emptyset$. In the first branch, we set $f(v) = 2$.
In the second branch, we set $f(v) = \overline{2}$.

\begin{enumerate}
    \item
If $N_G(v)\cap A(f)=\{x\}$, then $N_G(v)\cap \overline{V_2}(f)=\emptyset$ in this phase. Therefore, in the first branch, $f(x)=0$ is enforced by Reduction Rule LPN. Notice that this might further trigger Reduction Rule $V_0$ if $N_G(x)\cap (A(f)\cup \overline{V_1}(f))$ contains vertices other than~$v$. The corresponding worst-case recurrence is: $T(\mu) = T(\mu-(1-\omega_2))+T(\mu-(1+1))$, see \autoref{tab:branching-vectors}, fourth row.
    \item
If $N_G(v)\cap  \overline{V_2}(f)=\{x\}$, then $N_G(v)\cap A(f)=\emptyset$ in this phase. Therefore, in the first branch, $f(x)=0$ is enforced by Reduction Rule LPN. We consider several sub-cases now.
\begin{enumerate}
    \item $N_G(x)\cap \overline{V_1}(f)\neq\emptyset$. Reduction Rule $V_0$ will put all these vertices into $V_0(f)$. The corresponding worst-case recurrence is: $T(\mu) = T(\mu-(1-\omega_2))+T(\mu-(1+\omega_2+\omega_1))$, see \autoref{tab:branching-vectors}, fifth row.
    \item $|N_G(x)\cap A(f)|\geq 2$. Reduction Rule $V_0$ will put all these vertices into $\overline{V_2}(f)$  (except for $v$). The corresponding worst-case recurrence is: $T(\mu) = T(\mu-(1-\omega_2))+T(\mu-(1+\omega_2+(1-\omega_2)))$, see \autoref{tab:branching-vectors}, fourth row.
    \item Recall that by Reduction Rule Edges, $N_G(x)\cap (\overline{V_2}(f)\cup V_0(f)\cup V_1(f))=\emptyset$, so that (if the first two cases do not apply) now we have $N_G(x)\setminus\{v\}\subseteq V_2(f)$. By the properties listed above, also $N_G(x)\cap V_2(f)=\emptyset$ is clear, so that now $|N_G(x)|=1$, i.e., $x$ is a pendant vertex. In this situation, we do not gain anymore from the first branch, but when $f(v)=\overline{2}$ is set, Reduction Rule NPD triggers and sets $f(x)=1$. The corresponding worst-case recurrence is: $T(\mu) = T(\mu-(1+\omega_2))+T(\mu-(1-\omega_2+\omega_2))$, see \autoref{tab:branching-vectors}, sixth row.
\end{enumerate}
\end{enumerate}

\subsubsection{Branching in Phase~3.}
As $A(f)=\emptyset$, we are now branching on a vertex $v\in \overline{V_1}(f)$. Due to Reduction Rule Isolate, we know that $N_G(v)\cap\overline{V_2}(f)\neq \emptyset$.
In the first branch, we consider setting $f(v)=2$, while in the second branch, we set $f(v)=0$.
Again, we discuss several scenarios in the following.

\begin{enumerate}
    \item Assume that $|N_G(v)\cap \overline{V_2}(f)|\geq 3$. If we set $f(v)=2$, then Reduction Rule $V_2$ triggers at least thrice. The corresponding worst-case recurrence is: $T(\mu) = T(\mu-\omega_1)+T(\mu-(\omega_1+3\omega_2))$, with a branching vector of $(\omega_1,\omega_1+3\omega_2)$, see \autoref{tab:branching-vectors}, seventh row.  
    \item Assume that $|N_G(v)\cap \overline{V_2}(f)|=1$, i.e., there is some (unique) $u\in \overline{V_2}(f)$ such that $N_G(v)\cap \overline{V_2}(f)=\{u\}$. We consider two sub-cases:
    \begin{enumerate}
        \item If  $|N_G(u)\cap \overline{V_1}(f)|\geq 2$, then if we set $f(v)=2$, then first Reduction Rule LPN triggers $f(u)=0$, which in turn sets $f(w)=0$ for all $w\in N_G(u)\cap \overline{V_1}(f)$, $w\neq u$, by Reduction Rule $V_0$. The corresponding worst-case recurrence is: $T(\mu) = T(\mu-\omega_1)+T(\mu-(\omega_2+2\omega_1))$, see \autoref{tab:branching-vectors}, eighth row.  
        \item If  $|N_G(u)\cap \overline{V_1}(f)|=1$, then $u$ is a pendant vertex. Hence, in the first branch, we have (as above) $f(v)=2$ and $f(u)=0$, while in the second branch, we have $f(v)=0$ and $f(u)=1$ by Reduction Rule NPD. This decreases the measure by $\omega_1+\omega_2$ in both branches, see \autoref{tab:branching-vectors}, nineth row. This scenario happens in particular if the graph $G'$ induced by $\overline{V_1}(f)\cup \overline{V_2}(f)$ contains a connected component which is a $P_2$. Therefore, we refer to this (also) as a \emph{$P_2$-branching} below.
    \end{enumerate}
    \item Assume that $|N_G(v)\cap \overline{V_2}(f)|=2$, i.e., there are some $u_1,u_2\in \overline{V_2}(f)$ such that $N_G(v)\cap \overline{V_2}(f)=\{u_1,u_2\}$. Notice that in the first branch, when $f(v)=2$, Reduction Rule $V_2$ triggers twice, already reducing the measure by  $\omega_1+2\omega_2$.
    We consider further sub-cases:
    \begin{enumerate}
        \item If  $|N_G(u_1)\cap \overline{V_1}(f)|=1$, then $u_1$ is a pendant vertex. As in the previous sub-case, this helps us reduce the measure in the second branch   by $\omega_1+\omega_2$ due to Reduction Rule NPD, which obviously puts us in a better branching than \autoref{tab:branching-vectors}, nineth row. Similarly, we can discuss the case  $|N_G(u_2)\cap \overline{V_1}(f)|=1$.
        
        \item If  $|N_G(u_1)\cap \overline{V_1}(f)|=2$, then we know now that the graph $G'$ induced by $\overline{V_1}(f)\cup \overline{V_2}(f)$ is  bipartite after removing edges between vertices from $\overline{V_1}(f)$,  and vertices from $\overline{V_1}(f)$ all have degree two and vertices from $\overline{V_2}(f)$ all have degree at least two. The worst case for the following branching is hence given by a $K_{2,2}$ as a connected component in $G'$: Testing now all possibilities of setting the $\overline{V_2}(f)$-vertices to $0$ or to $1$ will determine all values of the $\overline{V_1}(f)$-vertices by reduction rules. Hence, we have in particular for the  $K_{2,2}$ a scenario with four branches, and in each branch, the measure is reduced by $2\omega_2+2\omega_1$ \emph{($K_{2,2}$-branching)}.
      
    \end{enumerate}
\end{enumerate}

\begin{proposition}\label{prop:run-time}On input graphs of order~$n$, 
Algorithm \textsc{Refined RD Enumeration} runs in time $\Oh^*(\RomanUpperbound^n)$.
\end{proposition}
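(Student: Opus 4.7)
The plan is a standard Measure \& Conquer argument built on top of the branching vectors already collected in \autoref{tab:branching-vectors}. First I would recall that the measure $\mu(G,f)=|A(f)|+\omega_1|\overline{V_1}(f)|+\omega_2|\overline{V_2}(f)|$ is bounded by $|V|=n$ whenever $\omega_1,\omega_2\in[0,1]$, and that for the initial call (with the nowhere defined grdf $f_\bot$) we have $\mu(G,f_\bot)=n$. Since the reduction rules never increase $\mu$ (as already observed in the main text), and each inner node of the search tree only does polynomial work (the reductions, the priority-based selection of a branching vertex, and a call to the polynomial-time procedure \textsc{GenExtRD Solver}), the total running time is of the form $\mathcal{O}^*(T(n))$, where $T(\mu)$ satisfies the recurrences dictated by the branching vectors from \autoref{tab:branching-vectors}. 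Standard theory then yields $T(\mu)=\mathcal{O}^*(c^{\mu})$, where $c$ is the largest branching number, i.e., the largest positive real root of the characteristic equations $\sum_i c^{-a_i}=1$ ranging over all branching vectors $(a_1,\dots,a_k)$ listed.

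Second, I would verify that the branching vectors in the table are correct, using the phase properties together with the soundness of the reduction rules (\autoref{prop:rdf-reductionrules-soundness}). Concretely, in each case I need to check that when we set $f(v)=2$ (respectively $f(v)\in\{\overline{2},0\}$), the cascaded reduction rules indeed remove from the active/barred sets the vertices claimed by the vector, so that the measure decreases by at least the stated amount. Most of these cases have already been spelled out in the preceding subsection, so here I only have to re-examine that no earlier-applicable reduction rule has pre-empted the branching situation, which follows from the phase invariants.

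Third, and this is the main obstacle, I would solve the optimization problem
\[\min_{\omega_1,\omega_2\in[0,1]}\ \max_{(a_1,\dots,a_k)\in\mathcal{B}}\ c(a_1,\dots,a_k),\]
where $\mathcal{B}$ is the set of branching vectors in \autoref{tab:branching-vectors} and $c(\cdot)$ denotes the associated branching number. Observe that the two four-way/two-way branchings in Phase~3 (\emph{$P_2$-branching} and \emph{$K_{2,2}$-branching}) both reduce to the single constraint $c^{\omega_1+\omega_2}\geq 2$, which forces $\omega_1+\omega_2$ to be a little above $1$. The remaining vectors in Phases~1 and~2 couple $\omega_1$ and $\omega_2$ asymmetrically through terms like $1-\omega_2$, $3-2\omega_1$, and $1+\omega_2+\omega_1$. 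A numerical search (say by gradient descent or by quasi-Newton methods on the smooth surrogate $\max_k c_k^\beta$ for large $\beta$) locates weights $\omega_1,\omega_2$ close to $0.525$ at which every branching number in $\mathcal{B}$ is bounded by $\RomanUpperbound$, with the $P_2$- and $K_{2,2}$-branchings being tight. For the write-up, I would simply exhibit such a pair of weights and check by direct computation that each of the characteristic equations $\sum_i c^{-a_i}=1$ has its root at most $\RomanUpperbound$.

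Combining these three ingredients yields $T(n)=\mathcal{O}^*(\RomanUpperbound^n)$, proving the claim.
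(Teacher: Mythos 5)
Your overall strategy coincides with the paper's: the published proof consists exactly of ``follow the branching vectors of \autoref{tab:branching-vectors} and plug in suitable weights'', so your three ingredients (the bound $\mu(G,f_\bot)=n$, re-verification of the branching vectors against the phase properties, and a numerical optimization of $\omega_1,\omega_2$) are the right skeleton, and your observation that the $P_2$- and $K_{2,2}$-branchings force $\omega_1+\omega_2\geq 1/\log_2(\RomanUpperbound)\approx 1.0516$ is correct and identifies two of the tight constraints.

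The concrete problem is the claimed outcome of the optimization. Weights with $\omega_1,\omega_2$ both near $0.525$ do \emph{not} work: with $\omega_1=\omega_2=0.525$ and $c=\RomanUpperbound$ one gets for vector 1.1, i.e.\ $(1-\omega_2,\,3-2\omega_1)=(0.475,\,1.95)$, the value $c^{-0.475}+c^{-1.95}\approx 0.731+0.277=1.008>1$, and for vector 3.2.a, i.e.\ $(\omega_1,\,2\omega_1+\omega_2)=(0.525,\,1.575)$, the value $c^{-0.525}+c^{-1.575}\approx 0.708+0.354=1.062>1$; both branching numbers therefore exceed $\RomanUpperbound$, and your final ``check by direct computation'' would fail. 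The difficulty is that vectors 1.1 and 1.2 push $\omega_1$ and $\omega_2$ downward while 3.2.a and the $P_2$/$K_{2,2}$ cases push them upward, and the balance point is markedly asymmetric: the paper takes $\omega_1=\tfrac{2}{3}$ and $\omega_2=0.38488$ (so $\omega_1+\omega_2\approx 1.0516$ as you predicted), at which cases 1.1, 3.2.b and 3.3 are simultaneously tight and all remaining characteristic sums are at most $1$. With that correction your argument goes through and is the paper's proof.
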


\begin{pf}
We follow the run-time analysis that led us to the branching vectors listed in \autoref{tab:branching-vectors}. The claim follows by choosing as weights $\omega_1= \frac{2}{3}$ and $\omega_2=0.38488$.
\end{pf}

The two worst-case branchings (with the chosen weights $\omega_1= \frac{2}{3}$ and $\omega_2=0.38488$) are 1.1, 3.2.b and 3.3.
If we want to further improve on our figures, we would have to work on a deeper analysis in these cases.
For the $P_2$-branching, it might be an idea to combine it with the branchings where it could ever originate from. Notice that adjacent $\overline{V_1}$-$\overline{V_2}$-vertices can be only produced in the first branching phase. But we would then have to improve also on Phase 3.3, the worst case being a $K_{2,2}$-branching in Case 3.3 (b).

\smallskip\noindent
Let us finally summarize the corner-stones of our reasoning.
\begin{proof}[\autoref{thm:minimal-rdf-enumeration}]
Several important properties have been claimed and proved about Algorithm~\ref{alg:refined-enum} that show the claim of our second main theorem.
\begin{itemize}
    \item The algorithm correctly enumerates all minimal rdf; see \autoref{cor:correct-enumeration}.
    \item The algorithm needs polynomial space only; see \autoref{cor:poly-space}.
    \item The algorithm achieves polynomial delay; see \autoref{prop:poly-delay}.
    \item The algorithm runs in time $\Oh^*(\RomanUpperbound^n)$ on input graphs of order~$n$; see \autoref{prop:run-time}.\qed 
\end{itemize}
\end{proof}

\section{An Alternative Notion of Minimal RDF}
\label{sec:alternative-notion}

So far, we focused on an ordering of the functions $V\to\{0,1,2\}$ that was derived from the linear ordering $0<1<2$. Due to the different functionalities, it might be not that clear if 2 should be bigger than 1.
If we rather choose as a basic partial ordering $0<1,2$, with $1,2$ being incomparable, this yields another ordering for the functions $V\to\{0,1,2\}$, again lifted pointwise. Being reminiscent of partial orderings, let us call the resulting notion of minimality PO-minimal rdf.
Recall that the notion of minimality for Roman dominating functions that we considered so far and that we also
view as the most natural interpretation of this notion has been refuted in the literature, because it leads to a trivial notion of \textsc{Upper Roman Domination}, because the minimal rdf $f:V\to\{0,1,2\}$ with biggest
sum $\sum_{v\in V}f(v)$ is achieved by the constant function $f=1$. This is no longer true for the (new) problem \textsc{Upper PO-Roman Domination}. 

Also, this can be seen as a natural pointwise lifting of the inclusion ordering, keeping in mind that $f\leq_{PO}g$ iff $V_1(f)\subseteq V_1(g)$ and $V_2(f)\subseteq V_2(g)$.

More interesting for the storyline of this paper are the following results:

\begin{theorem}\label{t_porperty_min_rdf}
Let $G=\left(V,E\right)$ be a graph, $f: \: V \to \lbrace 0,1,2\rbrace$  
and abbreviate
$G'\coloneqq G\left[ V_0\left(f\right)\cup V_2\left(f\right)\right]$. Then, $f$ is a PO-minimal rdf if and only if the following conditions hold:
\begin{enumerate}
\item$N_G\left[V_2\left(f\right)\right]\cap V_1\left(f\right)=\emptyset$,\label{con_1_2_PO}
\item $V_2\left(f\right)$ is a minimal dominating set of $G'$.\label{con_min_dom_PO}
\end{enumerate}
\end{theorem}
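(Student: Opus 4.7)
The plan is to mirror the proof of the earlier (standard) characterization theorem \autoref{t_porperty_min_rdf}, but to track where exactly the privacy condition was used. The key observation is that under the partial order $0 <_{PO} 1$, $0 <_{PO} 2$, with $1$ and $2$ incomparable, the only strict pointwise PO-decreases are $1\mapsto 0$ and $2\mapsto 0$. In particular, changing a value from $2$ to $1$ is \emph{not} a PO-decrease, so the argument in the proof of \autoref{t_private_neighborhood} no longer applies, and Condition~\ref{con_private} disappears.

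For the ``only if'' direction, I would recycle the proof of \autoref{t_1_2_neigborhood} verbatim for Condition~\ref{con_1_2_PO}: if $u\in V_1(f)$ has a neighbor $v\in V_2(f)$, then setting $f(u):=0$ yields a function $\widetilde{f}<_{PO} f$ that is still an rdf (because $u$ is dominated by $v$ and $V_2(\widetilde{f})=V_2(f)$), contradicting PO-minimality. For Condition~\ref{con_min_dom_PO}, the rdf property of $f$ gives that $V_2(f)$ dominates $V_0(f)\cup V_2(f)$; and if $V_2(f)\setminus\{v\}$ already dominated $G'$, flipping $f(v)$ from $2$ down to $0$ would give a strictly PO-smaller rdf (every $w\in V_0(f)\cup\{v\}$ still has a neighbor in $V_2(f)\setminus\{v\}$), again contradicting PO-minimality.

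For the ``if'' direction, assume both conditions hold. Then $f$ is an rdf since Condition~\ref{con_min_dom_PO} provides for each $v\in V_0(f)$ a neighbor in $V_2(f)$. Suppose for contradiction some rdf $g$ satisfies $g<_{PO} f$, and pick $v$ with $g(v)<_{PO} f(v)$. Since $g\leq_{PO} f$ entails $V_1(g)\subseteq V_1(f)$ and $V_2(g)\subseteq V_2(f)$, there are only two cases. If $f(v)=1$ and $g(v)=0$, then some neighbor $u$ of $v$ has $g(u)=2$, hence $u\in V_2(g)\subseteq V_2(f)$, contradicting Condition~\ref{con_1_2_PO}. If $f(v)=2$ and $g(v)=0$, I verify that $V_2(f)\setminus\{v\}\supseteq V_2(g)$ still dominates $G'$: for $w\in V_0(f)$ we have $g(w)=0$ (as $g\leq_{PO} f$ forbids values $1$ or $2$ here), so $w$ has a neighbor in $V_2(g)$; for $w=v$, likewise $g(v)=0$ gives a neighbor in $V_2(g)$; and for $w\in V_2(f)\setminus\{v\}$, either $w\in V_2(g)$ (and dominates itself) or $w\in V_0(g)$ (since $V_1(g)\subseteq V_1(f)$ excludes $w\in V_1(g)$), in which case $w$ again has a neighbor in $V_2(g)$. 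Hence $V_2(f)\setminus\{v\}$ dominates $G'$, contradicting Condition~\ref{con_min_dom_PO}.

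The only delicate step is the bookkeeping in the second case of the ``if'' direction, where one must confirm that removing $v$ from $V_2(f)$ still leaves a dominating set of all of $G'$, including vertices from $V_2(f)\setminus\{v\}$ themselves; this uses $V_1(g)\subseteq V_1(f)$ to rule out those vertices landing in $V_1(g)$. Everything else is a direct simplification of the proof of \autoref{t_porperty_min_rdf} in which the privacy condition and its associated case have been excised.
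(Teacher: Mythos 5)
Your proof is correct and follows essentially the same route as the paper's: the same two-case analysis for the ``if'' direction (with the $1\mapsto 0$ case contradicting Condition~1 and the $2\mapsto 0$ case contradicting minimal domination), and the same removal argument for the ``only if'' direction, phrased via direct redundancy of $v$ rather than via irredundance. Your explicit bookkeeping that $V_2(f)\setminus\{v\}$ dominates all of $G'$ (including the other $V_2(f)$-vertices, using $V_1(g)\subseteq V_1(f)$) is a point the paper leaves implicit, but it is the same argument.
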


\begin{pf}
First we look into the ``only if''-part. The first condition follows analogously from \autoref{t_1_2_neigborhood}. For the other condition, we assume that there exists a graph $G=\left(V,E\right)$ and a PO-minimal-rdf $f: V \to \lbrace 0,1,2\rbrace$ such that $V_2(f)$ is not a minimal dominating set in $G'$. Since $f$ is a rdf, $V_2(f)$ is a dominating set in~$G'$. Thus, $V_2(f)$ is not irredundant in $G'$. Hence, there exists a $v\in V_2(f)$ such that $N[v] \subseteq N_G[V_2(f)\setminus \lbrace v\rbrace]$. Define 
$$ \widetilde{f}:V\to \lbrace 0,1,2\rbrace,\: w\mapsto\begin{cases}
f\left(w\right), &w\neq v\\
0,& w=v
\end{cases}. $$
Clearly, vertices $w\in \left( V_0(f) \cup V_1(f)\right)\setminus N_G[v]$ are dominated by $\widetilde{f}$. But $N_G[v]$ is also dominated, since $N_G[v] \subseteq N_G[V_2(f)\setminus \lbrace v\rbrace]$ holds. This would contradict the PO-minimality of $f$. 

Let $f$ be a function that fulfills the two conditions. Since $V_2\left(f\right)$ is a dominating set in $G'$, for each $u\in V_0\left( f\right)$, there exists a $v\in V_2\left(f\right)\cap N_{G}\left[u\right]$. Therefore, $f$ is a rdf. 
Let $\widetilde{f}:V \to \lbrace 0,1,2 \rbrace$ be a PO-minimal rdf such that $\widetilde{f}$ is smaller than $f$ with respect to the partial ordering. Therefore, $\widetilde{f}$ (also) satisfies the two conditions. 
Assume that there exists a $v\in V$ with $\widetilde{f}\left( v\right) < f\left( v \right)$. Hence,  $V_2\left(\widetilde{f}\right)\subseteq V_2\left(f\right)\setminus \lbrace v\rbrace$.  

\noindent
\textbf{Case 1:} $\widetilde{f}\left( v\right)=0 ~and~ f\left( v \right) =1$. Therefore, there exists a vertex $u\in N_G\left(v\right)$ with $f\left(u\right)\geq \widetilde{f}\left(u\right)=2$. This contradicts Condition~\ref{con_1_2}.

\noindent
\textbf{Case 2:} $\widetilde{f}\left( v\right)=0 ~and~ f\left( v \right) =2$. Thus, for each $u\in V_0(f) \cap N_G[v]\subseteq V_0(\widetilde{f})\cap N_G[v]$ there exists a $w\in V_2(\widetilde{f})\cap N_G(u)\subseteq V_2(f)\cap N_G(u)$. This implies, that $V_2(f)$ is not irredundant in $G'$, which contradicts the second condition.

Therefore, $\widetilde{f} = f$ holds and $f$ is PO-minimal.
\end{pf}

Based on this characterization of PO-minimality, we can again derive a positive algorithmic results for the corresponding extension problem.

\begin{theorem}\label{theorem:P_ExtPoRDF}
The extension problem \textsc{ExtPO-RDF} can be solved in polynomial time.
\end{theorem}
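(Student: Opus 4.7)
The plan is to mimic Algorithm~\ref{alg}, exploiting the simplification granted by the preceding characterization: the strict privacy condition is relaxed to ``$V_2(f)$ is a minimal dominating set of $G'$''. The crucial difference from \textsc{ExtRD} is that under $\leq_{PO}$ the values~$1$ and~$2$ are incomparable, so once $f(u)=1$, no extension can promote $u$ to~$2$. Consequently, the propagation step of Algorithm~\ref{alg} (relabelling a $V_1$-neighbour of some $v\in V_2(f)$ to~$2$) is no longer admissible, and I will immediately answer ``no'' whenever the necessary condition $V_1(f)\cap N_G[V_2(f)]=\emptyset$ fails.

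If this check passes, I build the \emph{canonical extension} $\widetilde f^{c}\colon V\to\{0,1,2\}$ given by $\widetilde f^{c}(v)=2$ for $v\in V_2(f)$, $\widetilde f^{c}(v)=0$ for $v\in N_G(V_2(f))\setminus V_2(f)$, and $\widetilde f^{c}(v)=1$ otherwise. The feasibility check guarantees $f\leq_{PO}\widetilde f^{c}$, and condition~\ref{con_1_2_PO} holds for $\widetilde f^{c}$ by construction. The algorithm then verifies condition~\ref{con_min_dom_PO} for $\widetilde f^{c}$: for every $v\in V_2(f)$, it tests whether some $x\in N_G[V_2(f)]$ satisfies $N_G[x]\cap V_2(f)=\{v\}$. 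If every $v$ passes this test, output ``yes'' together with the witness $\widetilde f^{c}$; otherwise output ``no''.

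For correctness, I will argue that $(G,f)$ is a yes-instance if and only if $\widetilde f^{c}$ itself is PO-minimal. The ``if'' direction is immediate. For ``only if'', assume any PO-minimal rdf $\widetilde f$ with $f\leq_{PO}\widetilde f$ exists: condition~\ref{con_1_2_PO} transfers to $\widetilde f^{c}$ via the inclusions $V_1(f)\subseteq V_1(\widetilde f)$ and $V_2(f)\subseteq V_2(\widetilde f)$, and condition~\ref{con_min_dom_PO} transfers through a monotonicity step. Concretely, for each $v\in V_2(f)\subseteq V_2(\widetilde f)$, any private-neighbour witness $x$ (satisfying $N_G[x]\cap V_2(\widetilde f)=\{v\}$ in $G[V_0(\widetilde f)\cup V_2(\widetilde f)]$) also satisfies $V_2(f)\cap N_G[x]\subseteq V_2(\widetilde f)\cap N_G[x]=\{v\}$ and lies in $N_G[V_2(f)]$ (either as $x=v$ or as a neighbour of $v$), so $x$ remains a witness in $G[N_G[V_2(f)]]$.

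The main obstacle is precisely this monotonicity claim: a priori one might fear that a PO-minimal extension must enlarge $V_2(f)$ in order to repair a failure of condition~\ref{con_min_dom_PO}, but the argument above rules this out, since enlarging $V_2$ can only destroy private neighbours and never create new ones. Once the equivalence is established, the polynomial-time bound follows: all steps---computing $N_G[V_2(f)]$, the disjointness check for condition~\ref{con_1_2_PO}, and the per-vertex private-neighbour tests for condition~\ref{con_min_dom_PO}---run in time polynomial (in fact, cubic) in $|V|$, analogously to the analysis in \autoref{propos:runtime}.
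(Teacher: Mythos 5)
Your proof is correct, and in spirit it follows the same route as the paper: both reduce the question to checking the two conditions of the PO-characterization on a canonical candidate. The paper, however, phrases its proof as ``modify Line~\ref{alg_private_test} of Algorithm~\ref{alg} from $N_G(v)\subseteq N_G[M_2\setminus\{v\}]$ to $N_G[v]\subseteq N_G[M_2\setminus\{v\}]$; the rest is analogous,'' i.e., it ostensibly keeps the while-loop that promotes a $V_1$-neighbour of a $V_2$-vertex to value~$2$. You correctly observe that under $\leq_{PO}$ this promotion is inadmissible (since $2\not\geq_{PO}1$) and replace it by an immediate \emph{no}-answer whenever $V_1(f)\cap N_G[V_2(f)]\neq\emptyset$ --- which is sound, because any $\widetilde f\geq_{PO}f$ freezes both the $1$'s and the $2$'s of $f$ and would then violate Condition~\ref{con_1_2_PO}. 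This is a point the paper's one-line proof glosses over, and your version makes it explicit. Your remaining steps are also sound: the feasibility check guarantees $f\leq_{PO}\widetilde f^{c}$ and Condition~\ref{con_1_2_PO} for $\widetilde f^{c}$; and your monotonicity argument for Condition~\ref{con_min_dom_PO} (a private neighbour $x$ of $v$ with respect to the larger set $V_2(\widetilde f)$ lies in $N_G[v]\subseteq N_G[V_2(f)]$ and remains private with respect to the subset $V_2(f)$ in the induced subgraph $G[N_G[V_2(f)]]$) is exactly the reason why no enlargement of $V_2(f)$ is ever needed, so testing the single candidate $\widetilde f^{c}$ suffices. The polynomial (cubic) running time is clear. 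In short: same characterization-based strategy as the paper, but with a cleaner, self-contained algorithm that repairs the one step where a literal reuse of Algorithm~\ref{alg} would fail.
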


\begin{pf}
For this problem, we have to modify Algorithm \ref{alg} again. This time, we have to modify Line \ref{alg_private_test} to \textbf{if} $N_G[v]\subseteq N[M_2\setminus \lbrace v\rbrace]$ \textbf{do}. The rest of the proof is analogous to the proof of  \autoref{theorem:correctness_alg}.
\end{pf}

Furthermore, we can show that for PO-minimal rdf, the simple enumeration algorithm is already provably optimal.

\begin{theorem}
There is a polynomial-space algorithm that enumerates all PO-minimal rdf of a given graph of order $n$ in time $\Oh^*(2^n)$ with polynomial delay. Moreover, there is a family of graphs $G_n$, with $G_n$ being of order $n$, such that $G_n$ has $2^n$ many PO-minimal rdf.
\end{theorem}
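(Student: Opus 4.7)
The plan splits naturally into an algorithmic upper bound and a matching lower-bound construction. The algorithmic key observation is that, by \autoref{t_porperty_min_rdf} (for PO-minimality), a PO-minimal rdf $f$ is \emph{uniquely} determined by $S\coloneqq V_2(f)$. Indeed, Condition~\ref{con_1_2_PO} forbids $V_1$-vertices inside $N_G[S]$, while the rdf property forces every vertex of $N_G(S)\setminus S$ to take value~$0$; hence $V_0(f)=N_G[S]\setminus S$ and $V_1(f)=V\setminus N_G[S]$. Enumerating PO-minimal rdf therefore reduces to enumerating those subsets $S\subseteq V$ that are minimal dominating sets of $G[N_G[S]]$, and there are at most $2^n$ of them.

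For the enumeration itself, I would fix any ordering $v_1,\dots,v_n$ of $V$ and run a binary branching algorithm that at depth~$i$ decides whether $v_i\in S$. Before recursing into either child, I would invoke the polynomial-time oracle of \autoref{theorem:P_ExtPoRDF} on the partial function derived from the decisions made so far, pruning whenever no PO-minimal completion exists. When depth~$n$ is reached along a non-pruned path, the unique rdf attached to the chosen $S$ (as described above) is output. Correctness (no duplicates) is immediate from the uniqueness observation; polynomial space follows because only the current root-to-leaf path has to be stored; polynomial delay follows by the same reasoning as in \autoref{prop:poly-delay} and \autoref{prop:poly-space}, since the oracle guarantees that any non-pruned internal node leads to at least one output, so at most $2n$ edges of the search tree are traversed between consecutive outputs. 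The total running time is $\Oh^*(2^n)$, with no need for a measure-and-conquer refinement, as the search tree simply has at most $2^n$ leaves and polynomial work is spent per node.

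For the matching lower bound, I would take $G_n$ to be the edgeless graph on $n$ vertices. For every $S\subseteq V(G_n)$, define $f_S(v)=2$ if $v\in S$ and $f_S(v)=1$ otherwise. Then $N_{G_n}[S]=S$, which makes Condition~\ref{con_1_2_PO} trivial, and $S$ is a minimal dominating set of $G_n[S]$ because in an edgeless graph each vertex only dominates itself, so removing any $s\in S$ would leave $s$ undominated. By \autoref{t_porperty_min_rdf}, each $f_S$ is a PO-minimal rdf, yielding $2^n$ distinct PO-minimal rdf. The only mildly delicate point in the whole argument is making sure that the partial assignments produced by the branching are translated into legitimate \textsc{ExtPO-RDF} instances for the oracle (decided vertices carry their chosen values in $\{0,1,2\}$, undecided vertices are set to $0$); this is routine, so I do not expect any real obstacle.
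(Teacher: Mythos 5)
Your overall plan coincides with the paper's: both reduce the enumeration to the sets $S=V_2(f)$ (your uniqueness observation is correct and is exactly what makes an Algorithm~\ref{alg:enum}-style enumeration work for PO-minimality), both prune a binary in/out branching with a polynomial-time extension oracle to obtain polynomial delay and polynomial space, and both use the edgeless graph for the $2^n$ lower bound. The lower-bound half of your argument is complete and matches the paper in substance.

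The gap is in the step you dismiss as routine, namely the translation of a partial branching state into an instance for the oracle of \autoref{theorem:P_ExtPoRDF}. A node of your search tree carries two kinds of decisions: vertices forced into $S$ and vertices forced out of $S$. The problem \textsc{ExtPO-RDF} only receives a function $f$ and asks for a PO-minimal rdf $\widetilde f$ above $f$; no value in $\{0,1,2\}$ assigned to a vertex $v$ that has been decided to lie outside $S$ expresses the constraint ``$\widetilde f(v)\neq 2$''. Setting $f(v)=0$ (your proposal for undecided vertices, and the only non-restrictive choice for excluded ones) still permits $\widetilde f(v)=2$, while $f(v)=1$ wrongly forbids $\widetilde f(v)=0$. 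With the exclusions ignored, the oracle may answer \emph{yes} at a node whose subtree contains no admissible output (because every extension places an excluded vertex into $V_2$), and then the polynomial-delay argument of \autoref{prop:poly-delay} no longer applies: the algorithm can traverse exponentially many fruitless nodes between two outputs. This is precisely why the paper does not invoke \textsc{ExtPO-RDF} itself but a generalized problem \textsc{GenExtPO-RDF} that takes an additional forbidden set $\overline{V_2}$ with the requirement $V_2(\widetilde f)\cap\overline{V_2}=\emptyset$, solved by combining the two modifications of Algorithm~\ref{alg} described in \autoref{lem:GenExtRD} and \autoref{theorem:P_ExtPoRDF}. Your proof needs this strengthened oracle (or an equivalent device) to be complete; everything else goes through.
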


\begin{pf}
The algorithm itself works similar to Algorithm~\ref{alg:enum}, but we have to integrate the extension tests as in Algorithm~\ref{alg:refined-enum}. Therefore, we need to combine our two modifications for Algorithm~\ref{alg}. This new version would solve the \textsc{GenExtPO-RDF}, where a graph $G=(V,E)$, a function $f:V\to\lbrace0,1,2\rbrace$ and a set $\overline{V_2}$ are given and we need to find a PO-minimal rdf $\widetilde{f}$ with $f\leq \widetilde{f}$ and $\overline{V_2}\cap V_2\left( f\right)=\emptyset$ (to prove this, combine the proofs of \autoref{lem:GenExtRD} and \autoref{theorem:P_ExtPoRDF}). 
To see optimality of the enumeration algorithm, notice that the null graph (edge-less graph) of order~$n$ has any mapping $f:V\to\{1,2\}$ as a PO-minimal rdf. 
\end{pf}

It follows that the (relatively simple) enumeration algorithm is optimal for PO-minimal rdf.
If one dislikes the fact that our graph family is disconnected, consider the star $K_{1,n}$ that has $2^{n}+1$ many different PO-minimal rdf: If $V(K_{1,n})=\{0,1,\dots,n\}$, with $0$ being the center and $i\in\{1,\dots,n\}$ being the `ray vertices' of this star, then either put $f(0)=2$ and $f(i)=0$ for $i\in\{1,\dots,n\}$, or $f(j)=1$ for $j\in \{0,1,\dots,n\}$, or $f(0)=0$ and $f(i)\in\{1,2\}$ is arbitrary for  $i\in\{1,\dots,n\}$ (except for $f(j)=1$ for $j\in \{1,\dots,n\}$).
This example proves that there cannot be any general enumeration algorithm running in time  $\Oh((2-\varepsilon)^n)$ for any $\varepsilon>0$, even for connected graphs of order~$n$.

\section{Conclusions}

While the combinatorial concept of Roman domination leads to a number of complexity results that are completely analogous to what is known about the combinatorial concept of  domination, the two concepts lead to distinctively different results when it comes to enumeration and extension problems. These are the main messages and results of the present paper.

We are currently working on improved enumeration and also on counting of minimal rdf in special graph classes.
Our first results are very promising; for instance, there are good chances to completely close the gap between lower and upper bounds for enumerating minimal rdf for some graph classes.

Another line of research is looking into problems that are similar to Roman domination, in order to better understand the specialties of Roman domination in contrast to the classical domination problem. What makes Roman domination behave different from classical domination when it comes to finding extensions or to enumeration?

Finally, let us mention that our main branching algorithm also gives an input-sensitive enumeration algorithm for minimal Roman dominating functions in the sense of Chellali \emph{et al.}~\cite{CheHHHM2016}. However, we do not know of a polynomial-delay enumeration algorithm in that case. This is another interesting line of research.
Here, the best lower bound we could find was a repetition of a $C_4$, leading to $\sqrt[4]{8}\geq 1.68179$ as the basis.

\end{document}